\newcommand{\bx}{\bm{x}}
\newcommand{\br}{\ensuremath{{\bm r}}}
\newcommand{\by}{\ensuremath{{\bm y}}}
\newcommand{\bs}{\ensuremath{{\bm s}}}
\newcommand{\set}[1]{\ensuremath{\mathcal #1}}
\newcommand{\separator}{
  \begin{center}
    \rule{\columnwidth}{0.3mm}
  \end{center}
}
\def\eg{{\it e.g.}}
\def\ie{{\it i.e.}}
\newtheorem{theorem}{Theorem}[section]
\newtheorem{lemma}{Lemma}[section]
\newtheorem{definition}{Definition}[section]
\newcommand{\beq}{\begin{eqnarray*}}
\newcommand{\eeq}{\end{eqnarray*}}
\newcommand{\beqn}{\begin{eqnarray}}
\newcommand{\eeqn}{\end{eqnarray}}
\newcommand{\bemn}{\begin{multiline}}
\newcommand{\eemn}{\end{multiline}}
\newcommand{\sqeq}{\addtolength{\thinmuskip}{-4mu}
\addtolength{\medmuskip}{-4mu}\addtolength{\thickmuskip}{-4mu}}
\newcommand{\unsqeq}{\addtolength{\thinmuskip}{+4mu}
\addtolength{\medmuskip}{+4mu}\addtolength{\thickmuskip}{+4mu}}
\def\N{\mathbb{N}}
\def\R{\mathbb{R}}
\newcommand{\fbs}[1]{\bfseries\slshape #1}
\newcommand{\slink}{V}
\newcommand{\blambda}{\boldsymbol{\lambda}}
\newcommand{\bsigma}{\boldsymbol{\sigma}}
\renewcommand{\N}{\mathcal{N}}
\newcommand{\bor}{\boldsymbol{r}}
\newcommand{\sI}{\mathcal{I}}
\newcommand{\y}{\boldsymbol{y}}
\newcommand{\E}{\mathcal{E}}
\newcommand{\n}{\mathcal{N}}
\begin{document}
%
 \title{CSMA using the Bethe Approximation: Scheduling and Utility Maximization\thanks{A part of this work was presented at IEEE ISIT 2013.}}

 \author{Se-Young Yun, Jinwoo Shin, and Yung Yi 
   
   \thanks{S. Yun is with
     MSR-INRIA, France (e-mail:
     seyoung.yun@inria.fr). J. Shin and Y. Yi are with the Department of Electrical
     Engineering, KAIST, Korea (e-mail: jinwoos@kaist.ac.kr ,yiyung@kaist.edu).}  }

\maketitle

\begin{abstract}
  CSMA (Carrier Sense Multiple Access), which resolves contentions over
  wireless networks in a fully distributed fashion, has recently gained
  a lot of attentions since it has been proved that appropriate control
  of CSMA parameters guarantees optimality in terms of stability (\ie,
  scheduling) and system-wide utility (\ie, scheduling and congestion
  control). Most CSMA-based algorithms rely on the popular MCMC (Markov Chain Monte
  Carlo) technique, which enables one to find optimal CSMA parameters
  through iterative loops of `simulation-and-update'.  However,
  such a simulation-based approach often becomes a major cause of
  exponentially slow convergence, being poorly adaptive to flow/topology
  changes.  In this paper, we develop distributed iterative algorithms
  which produce approximate solutions with convergence in polynomial
  time for both stability and utility maximization problems. In
  particular, for the stability problem, the proposed distributed
  algorithm requires, somewhat surprisingly, only one iteration among
  links.  Our approach is motivated
  by 
  the Bethe approximation (introduced by Yedidia, Freeman and Weiss \cite{Yedidia05constructingfree}) allowing us to express approximate solutions via a certain
  non-linear system with polynomial size. Our polynomial convergence
  guarantee comes from directly solving the non-linear system in a
  distributed manner, rather than multiple simulation-and-update loops
  in existing algorithms.  We provide numerical results to show that the
  algorithm produces highly accurate solutions and converges much faster
  than the prior ones.
\end{abstract}
\begin{IEEEkeywords}
CSMA, Bethe approximation, Wireless ad-hoc network.
\end{IEEEkeywords}

\section{Introduction}

\subsection{Motivation}

Recently, it has been proved that CSMA, albeit simple and fully
distributed, can achieve high performance in terms of throughput (\ie, the stability problem) and fairness (\ie, the utility maximization problem)
by joint scheduling/congestion controls \cite{JSSW10DRA,
  JW10DC,IEEEpaper:Liu_Yi_Proutiere_Chiang_Poor_2009,HP12SO}. These
advances show that even an algorithm with no or little message passing
can actually be close to the optimal performance, achieving significant progress
in terms of algorithmic complexity over the seminal work of Max-Weight
\cite{taseph92} and its descendant researches, each of which often
takes a tradeoff point between complexity and performance, see
\cite{LSS06,yi2008stochastic}. The main idea underlying the recent CSMA
developments is to intelligently control access intensities (i.e.,
access probability and channel holding time) over links so as to let
the resulting long-term link service rate converge to the target rate \cite{YY12OC}.

However, one of the main drawbacks for such CSMA algorithms is slow
convergence, which is problematic in practice due to its poor
adaptivity to network and flow configuration changes. The root cause
of slow convergence stems from the fact that all the above algorithms
are based on the MCMC (Markov Chain Monte Carlo) technique, where even
for a fixed CSMA intensity, it takes a long time, called mixing time, to
reach the stationary distribution to observe how the system
behaves. Note that the mixing time is typically exponentially large
with respect to the number of links \cite{JLN11FM}. 
For the mixing time issue, there
exist algorithms updating CSMA intensities before the system is mixed,
e.g., without time-scale separation between the intensity update and
the time to get the system state for a given intensity update
\cite{IEEEpaper:Liu_Yi_Proutiere_Chiang_Poor_2009,HP12SO}.  However,
they are not free from the slow convergence issue since their
convergence inherently also requires the mixing property of the
underlying network Markov process.  In summary, all prior CSMA
algorithms suffer from slow convergence explicitly or implicitly. The
main goal of this paper is to develop `mixing-independent' CSMA
algorithms to overcome the issue at the marginal cost of performance
degradation.

\subsection{Goal and Background}
We aim at drastically improving the convergence speed by using
the techniques in artificial intelligence and statistical physics (instead of the MCMC based ones)
for both stability under unsaturated case and utility
maximization under saturated case. For instance, in order to reach the convergent
service rates as the solution of the utility maximization problem,
the intermediate target service rates should be iteratively updated
toward the optimal rates, from which the transmission intensities are
consequently updated. Our key contribution lies in designing message-passing mechanisms to directly compute the required
access intensity for given target service rates in a distributed
manner, rather than estimation-based approaches in the MCMC
technique. 
In what follows, we present some necessary backgrounds before 
we describe more details of our main contributions

The CSMA setting can be naturally understood by a certain Markov random
field (MRF) \cite{kindermann1980markov}, which we call CSMA-MRF, in the
domain of physics and probability. In CSMA-MRF, links induce a graph
where links are represented by vertices and interfering links generate
edges. Access intensities over links correspond to MRF-parameters in
CSMA-MRF. Then, the service rate of each link becomes the marginal
distribution of the corresponding vertex in CSMA-MRF. In the area of
MRFs, free energy concepts such as `Gibbs free energy' function and
`Bethe free energy' function defined by the graph and MRF-parameter have
been studied to compute marginal probabilities in MRFs. For example, it
is known by \cite{Yedidia05constructingfree} that finding a minimum (or
zero-gradient) point of a Bethe function can lead to approximated values
for marginal distributions, where its empirical success has been widely
evidenced in many areas such as computer vision, artificial intelligence
and information
theory~\cite{Yedidia05constructingfree,F01CG,MWJ99LB}. The main benefit
of this approach is that zero-gradient (non-linear) equations of a free
energy function can provide low-complexity (approximate) consistency
conditions between marginal probabilities and MRF-parameters.

\subsection{Contribution}


First, for the stability problem, we assume that each link is
  aware of only its {\em local} load, \ie, its targeted marginal
  probability in CSMA-MRF.\footnote{The knowledge about the local
    (offered) load may be learnt by empirical estimations or provided
    by the admission control of the incoming flows.}  Given targeted
  marginal probabilities, we show that the Bethe equation
  (corresponding to the stability problem) is solvable, somewhat
  surprisingly, in one iteration among links.  Equivalently, each link
  can calculate its approximate access intensity for targeted
  throughputs of links in one iteration of message-passings between
  neighbors. The result relies on the following special property of
  CSMA-MRF (which is not applicable for other general MRFs): 
\begin{itemize}
  \item[($\dagger$)]The higher-order marginal probabilities needed by
    the Bethe free energy (BFE) 
  functions are decided by the first-order marginal probabilities in
  CSMA-MRF.
\end{itemize}
Our algorithm, called {\bf BAS}, for the
  stability problem are presented in Section \ref{sec:stability}.
  
Second, we provide a distributed CSMA algorithm, called {\bf BUM},
  for the utility maximization problem, and show that it converges in a polynomial
  number of iterations, which is dramatically faster than prior algorithms based on MCMC.
  The {\bf BUM} algorithm consists of two phases: the first and second
phases aim at computing targeted service rates (\ie, marginal distributions) and corresponding 
CSMA intensities (\ie, MRF parameters),
respectively. 
We formulate these computational
problems as minimizing  Bethe free energy (BFE)
functions. We show that the Bethe function in its first phase is
convex for the popular $\alpha$-fairness utility functions \cite{mowa00}, and
develop a distributed gradient algorithm for minimizing it.
For the second phase, we use the {\bf BAS} algorithm developed for the
stability problem. We also
characterize the error of the {\bf BUM} algorithm in terms of that of
the {\bf BAS} algorithm, i.e., if {\bf BAS} is accurate, {\bf BUM} is as
well. The description and analysis of {\bf BUM} are given in Section
\ref{sec:utilmax}.

Our main technical contribution for the {\bf BUM} algorithm lies in developing
 a distributed gradient algorithm in the first phase. Even though we prove that 
the BEF function is convex,
it is still far from being clear
  that a distributed gradient algorithm can achieve its minimum since
  its domain is a bounded polytope, i.e., the BFE function is
  constrained by linear inequalities. To overcome this issue, we use the
  following special property of the BFE function in CSMA-MRF (which is
  not generally true for
  other BFE functions): 
\begin{itemize}
	\item[($\ddagger$)] The minimum of the BFE function is strictly inside of its domain. 
\end{itemize}
Using the property ($\ddagger$), we carefully choose a (dynamic)
projection scheme for the gradient algorithm so that it never hits the
boundary of the BFE function after a number of iterations, say $T$.
Then, after $T$ iterations, the gradient algorithm is analyzable to
converge similarly as its optimizing function is unconstrained.  

\smallskip Our simulation results show that the proposed schemes
converge fast and the approximation is accurate enough. First, we test
the actual service rate of {\bf BAS}  and verify that the
service rates are close to the target rates. Next, {\bf BUM} is
compared with conventional utility optimal CSMA algorithms. In the
results,  {\bf BUM} converges within 1000 iterations, whereas the
conventional schemes do not converge even until 10000
iterations. Moreover, the achieved network utility is almost the same with the
utility by conventional algorithms. We also note that {\bf
    BUM} can converge much quickly. Since each update of {\bf BUM} does not require
  to estimate the underlying service rates, we can run {\bf
    BUM} as an offline algorithm which can be done without any packet transmission.

\smallskip
In addition to MCMC-based approaches on developing CSMA algorithm for the stability and utility maximization problems,
the authors of \cite{KL12ABP} studied 
the Belief Propagation (BP)
algorithm for solving them. BP and BFE functions are connected as discussed
in \cite{Yedidia05constructingfree}, in that there is an one-to-one correspondence between
fixed points of BP and local minima of BFE functions.
However, the proposed algorithms in \cite{KL12ABP} may take a long time to converge
for the stability problem, and may
not converge at all for the utility maximization problem. 
Our work differs from \cite{KL12ABP} in that BFE
functions are exploited not to find marginal distributions in
CSMA-MRF but to find MRF-parameters given the targeted
marginal distributions.

\section{Model and Problem Description}
\label{sec:model}
For reader's convenience, we make a list of notations, which is given in Appendix~\ref{apd:nota}
\subsection{Model}


\noindent{\fbs Network model.}
In a wireless network, each link $i$, which consists of a transmitter node
and a receiver node, shares the wireless medium with its `neighboring'
links, meaning the ones that are interfering
with $i$, i.e., the transmission over $i$ cannot be successful if a
transmission in at least one neighboring link occurs simultaneously.  
We assume that each link has a unit capacity. 
The interference relationship among links can be represented by
a graph $G=(\slink,E)$, popularly known as the {\em interference
  graph}, where links in the
wireless network are represented by the set of vertices $\slink,$ and
any two links $i,j$ share an edge $(i,j)\in E$ if their transmissions interfere with each other. 


\smallskip
\noindent{\fbs Feasible rate region.}
We let $\bsigma(t) \triangleq [\sigma_i(t) \in \{0,1\}:i\in\slink]$~\footnote{Let $[x_i:i\in\slink]$ denote the vector whose
  $i$-th element is $x_i.$ For notational convenience, instead of
  $[x_i:i\in\slink]$, we use $[x_i]$ in the remainder of this paper.} 
denote the {\em scheduling vector} at time $t$, i.e., link $i$ is
active or transmits packets (if it has any)
with unit rate at time $t$ if $\sigma_i(t)=1$ (and does not otherwise). 
The scheduling vector $\bsigma(t)$ is said to be {\em feasible} if no interfering links are active simultaneously at time $t$, 
i.e., $\sigma_i(t)+ \sigma_j(t) \le 1,~\forall (i,j) \in E$. 
We use $\N(i)\triangleq \{j: (i,j)\in E \}$ to
denote the set of the neighboring links of link $i$, $d(i) \triangleq |\N(i)|$ and $d \triangleq \max_i d(i)$. 
Then, the set of all feasible schedules $\sI(G)$ is given by:
$$\sI(G)\triangleq \{\bsigma \in \{0,1\}^{n}:\sigma_i+\sigma_j \le 1, \forall (i,j) \in E \}.$$
The feasible rate region $C(G),$ which is the set of all possible
service rates over the links, is simply the convex hull of $\sI(G)$,
defined as follow: 
\sqeq
\begin{equation*}
C(G) \triangleq \\ \left\{ \sum_{\bsigma \in
    \sI(G)}\alpha_{\bsigma}\bsigma :\sum_{\bsigma \in
    \sI(G)}\alpha_{\bsigma}=1,~ \alpha_{\bsigma}\ge 0,~ \forall \bsigma
  \in \sI(G) \right\}.
\end{equation*}
\unsqeq

\smallskip
\noindent{\fbs CSMA (Carrier Sense Multiple Access).} 
Now we describe a CSMA algorithm which updates the scheduling vector $\bsigma(t)$
in a distributed fashion. Initially, the algorithm starts with the null schedule, i.e., $\bsigma(0)={\bf 0}$.
Each link $i$ maintains an independent Poisson clock of unit rate, and
when the clock of link $i$ ticks at time $t$, update its schedule $\sigma_i(t)$ as
\begin{itemize}
	\item[$\circ$] If the medium is sensed busy, i.e., 
	there exists $j\in \N(i)$ such that $\sigma_j(t)=1$, then $\sigma_i(t^+)=0$.
	\item[$\circ$] Else, $\sigma_j(t^+)=1$ with probability $\frac{\exp(r_i)}{\exp(r_i)+1}$
	and $\sigma_j(t)=0$ otherwise. 
\end{itemize}
In above, $r_i>0$ is called the {\em transmission intensity} (or simply
{\em intensity}) of link $i$.
The schedule $\sigma_i(t)$ of link $i$ remains unchanged while its clock does not tick.



Under the algorithm, the scheduling process $\{\bsigma(t):t\geq 0\}$
becomes a time reversible Markov process. It is easy to check that
its stationary distribution for given $\bor=[r_i]$ becomes:
\sqeq
\begin{equation}
\pi^{\bor}=[\pi^{\bor}_{\bsigma}: \bsigma \in
\set{I}(G)]~\mbox{where}~\pi^{\bor}_{\bsigma}=\frac{\exp\left(\sum_{i \in \slink}\sigma_ir_i\right)}{\sum\limits_{{\bm \rho} \in
\set{I}(G)}\exp\left(\sum_{i \in \slink}\rho_ir_i\right)}.
\label{eq:1}
\end{equation}
\unsqeq
In other words, the stationary distribution is expressed as
a product form of transmission intensities over links.
Then, due to the ergodicity of Markov process $\{\bsigma(t)\}$, the
long-term service rate of link $i$ is a function of transmission
intensity $\bor,$ which is the sum of all stationary
probabilities of the schedules where $i$ is active. We denote by
$s_i(\bor)$ the service rate of link $i$, which is 
\begin{equation}\label{eq:service}
s_i(\bor) =\mathop{\sum_{\bsigma \in \sI(G)}}_{:\sigma_i=1}\pi^{\bor}_{\bsigma} =\frac{\sum_{\bsigma \in \sI(G):\sigma_i=1}\exp(\sum_{i \in \slink}\sigma_ir_i)}{\sum_{\bsigma' \in \sI(G)}\exp(\sum_{i \in \slink}\sigma'_ir_i)}.\end{equation}

\subsection{Problem Description: P1 and P2}

In this section, we describe two central problems for designing CSMA
algorithms of high performances.  In a wireless network where CSMA is
used as the medium access control (MAC) mechanism, suppose packets arrive with rate $\lambda_i>0$ at
link $i$.  Then, the first-order question is about its stability, i.e.,
whether the total number of packets remains bounded as a function of
time.  Under the wireless network model considered in this paper, it is
not hard to check that the necessary and sufficient condition for
stability is that the service rate $s_i$ is larger than the arrival rate
$\lambda_i$. Therefore, this motivates the following question for the
CSMA algorithm design.

\medskip
\begin{compactenum}[\bf P1.]
	\item {\bf Stability.} For a given rate vector
  $\blambda=[\lambda_i] \in C(G)$, how can each link $i$ find its 
transmission intensity $r_i$ in a
  distributed manner so that 
$$\lambda_i
  = s_i(\bor),\qquad\mbox{for all links $i \in V$}?$$

\end{compactenum}
\smallskip
For the simple presentation of our results, we consider $\lambda_i
  = s_i(\bor)$ instead of $\lambda_i < s_i(\bor)$ in the description of the stability problem.
However, one can also obtain $\lambda_i < s_i(\bor)$ by solving {\bf P1}
with $\lambda_i \leftarrow \lambda_i+\varepsilon$ for small $\varepsilon>0$.

The second problem arising in wireless networks is controlling congestion, i.e, 
how to control the CSMA's intensity $\bor$ so that the
resulting rate allocation maximizes the total utility of the network.
Formally speaking, we study the following question.

\medskip
\begin{compactenum}[\bf P2.]
\item {\bf Utility Maximization.} Assume that each link $i$ has its utility function
$U_i:[0,1]\to \mathbb R_+$. How can each link $i$ find its
  transmission intensity $r_i$ in a
  distributed manner so that the total utility 
  $\sum_{i \in \slink}U_i (s_i(\bor))$ is maximized?
Our main optimizing goal is 
\begin{equation}
\text{\bf (OPT)} \qquad
\max_{\bor}\quad \sum_{i \in \slink} U_i (s_i(\bor)),\label{pro:util}
\end{equation}
when $U_i$ follows the class of $\alpha$-fair utility functions~\cite{mowa00}. 
\end{compactenum}
\section{Stability}
\label{sec:stability}

In this section, we present an approximation algorithm for the stability problem. The
problem finding a TDMA schedule (i.e., finding a repetitive scheduling
pattern over frames) to generate a target service rate vector has long
been studied, where the problem turns out to be NP-hard in many cases (a
variation of graph coloring) or allows polynomial time complexity only
for a special interference pattern such as node-exclusive interference,
see Chap. 2 of \cite{TDMA_thesis} for a survey. Even a distributed random access
based distributed algorithm requires exponentially long convergence
time in terms of the number of links \cite{yivesh06a}. 
The slow convergence of the prior CSMA-based iterative 
algorithms \cite{JSSW10DRA} for stability is primarily due to the fact
that it is hard to compute $s_i(\bor)$ given transmission intensity
$\bor$, i.e., it is not even clear whether the stability problem is in
NP. 

To overcome such a hurdle, we use a notion of free energy concepts in artificial intelligence and statistical
physics which allow to compute $s_i(\bor)$ efficiently in an approximate
manner. 


\subsection{Preliminaries: Free Energies for CSMA}\label{sec:freeenergyforcsma}

\noindent{\fbs Free energy functions.} 
We introduce the free energy functions for CSMA Markov processes
for transmission intensity $\bor$.
\begin{definition}[Gibbs and Bethe Free Energy]\label{def:energy}
$~$\\Given a random variable $\bsigma=[\sigma_i]$ on space $\sI(G)$ 
and its probability distribution $\nu$,
{\em Gibbs} free energy (GFE) and {\em Bethe} free energy (BFE) functions denoted
by $F_G(\nu;\bor)$ and $F_B(\nu;\bor)$ are defined as:
  \begin{align*}
    F_G(\nu;\bor)& = \E(\nu;\bor) - H_G(\nu), \quad
F_B(\nu;\bor) = \E(\nu;\bor) - H_B(\nu),
\end{align*}where $\E(\nu;\bor) = -\mathbb{E}[\bor \cdot \bsigma ]$, $H_G(\nu)=\mathbb{H}(\bsigma), $ and 
\begin{eqnarray*}
    H_B(\nu)&=& \sum_{i\in V} \mathbb{H}(\sigma_i) - \sum_{(i,j)\in E} \mathbb{I}(\sigma_i;\sigma_j).
\end{eqnarray*}
\end{definition}
In above, $\mathbb{E},$ $\mathbb{H},$ and $\mathbb{I}$ are the expected value, standard
entropy, and mutual information, respectively.  BFE can be thought
as an approximate function of GFE,\footnote{
  $F_B(\nu;\bor)=F_G(\nu;\bor)$ if the interference graph $G$ is a tree.} where $H_B$ is
called the `Bethe' entropy.
We note that in general the energy term $\E(\nu;\bor)$ can have a (different) form other than $-\mathbb{E}[\bor \cdot \bsigma ]$.

\smallskip
\noindent{\fbs How free energy meets CSMA.} 
The following theorem is a direct adaptation of the known results in literature (cf.~\cite{georgii1988gibbs}).
\begin{theorem}
  \label{thm:gibbs_maximal}
  The stationary distribution $\pi^{\bor}$ in~(\ref{eq:1}) of the CSMA Markov process with
  intensity $\bor$ is the unique minimizer of $F_G(\nu;\bor)$,
  i.e., $\pi^{\bor} = \arg\min_{\nu} F_G(\nu;\bor)$.
\end{theorem}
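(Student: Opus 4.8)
The plan is to use the classical variational characterization of Boltzmann--Gibbs measures: I would rewrite the Gibbs free energy as a relative entropy (Kullback--Leibler divergence) between the candidate distribution $\nu$ and the target $\pi^{\bor}$, up to an additive constant, and then read off the unique minimizer from the non-negativity of relative entropy.

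First I would introduce the partition function $Z(\bor) \triangleq \sum_{\brho \in \sI(G)} \exp\!\big(\sum_{i\in\slink}\rho_i r_i\big)$, so that from~(\ref{eq:1}) we have $\log \pi^{\bor}_{\bsigma} = \bor\cdot\bsigma - \log Z(\bor)$ for every $\bsigma \in \sI(G)$. Unfolding Definition~\ref{def:energy}, the Gibbs free energy reads $F_G(\nu;\bor) = -\mathbb{E}_\nu[\bor\cdot\bsigma] + \sum_{\bsigma\in\sI(G)} \nu(\bsigma)\log\nu(\bsigma)$. Substituting $\bor\cdot\bsigma = \log\pi^{\bor}_{\bsigma} + \log Z(\bor)$ into the energy term and using $\sum_{\bsigma}\nu(\bsigma)=1$, the expression collapses to
\[
F_G(\nu;\bor) = \sum_{\bsigma\in\sI(G)} \nu(\bsigma)\,\log\frac{\nu(\bsigma)}{\pi^{\bor}_{\bsigma}} - \log Z(\bor) = D(\nu\,\|\,\pi^{\bor}) - \log Z(\bor),
\]
where $D(\cdot\,\|\,\cdot)$ denotes the relative entropy. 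This rewriting is legitimate precisely because $\pi^{\bor}_{\bsigma}>0$ on all of $\sI(G)$ (the exponential is strictly positive), so the divergence is well-defined for every $\nu$ supported on $\sI(G)$.

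Second, since $\log Z(\bor)$ is independent of $\nu$, minimizing $F_G(\nu;\bor)$ over distributions on $\sI(G)$ is equivalent to minimizing $D(\nu\,\|\,\pi^{\bor})$. By Gibbs' inequality, $D(\nu\,\|\,\pi^{\bor})\ge 0$ with equality if and only if $\nu=\pi^{\bor}$, which simultaneously identifies $\pi^{\bor}$ as the minimizer and establishes its uniqueness, the minimal value being $-\log Z(\bor)$. I would justify the equality case by the strict convexity of $t\mapsto t\log t$ via Jensen's inequality, the strictness being exactly what delivers uniqueness.

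There is no serious obstacle here: the statement is a direct adaptation of a standard fact (cf.~\cite{georgii1988gibbs}). The only point requiring care is the uniqueness claim, which hinges on the equality case of Gibbs' inequality and hence on the strict positivity of $\pi^{\bor}$, so I would state the support condition explicitly rather than leave it implicit. An alternative route would be a Lagrange-multiplier argument: the negative entropy is strictly convex on the probability simplex while the energy is linear, so stationarity of the Lagrangian for the normalization constraint forces $\log\nu(\bsigma)=\bor\cdot\bsigma+\mathrm{const}$, i.e. $\nu\propto\exp(\bor\cdot\bsigma)$. However, the relative-entropy route is shorter and makes the minimal value $-\log Z(\bor)$ transparent, so I would present that one.
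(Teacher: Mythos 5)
Your proof is correct and is precisely the classical argument that the paper relies on: the paper itself gives no proof of this theorem, deferring to the literature (cf.\ the cited Gibbs-measure text), and the relative-entropy identity $F_G(\nu;\bor) = D(\nu\,\|\,\pi^{\bor}) - \log Z(\bor)$ combined with the equality case of Gibbs' inequality is exactly the standard variational characterization behind that citation. Your explicit attention to the strict positivity of $\pi^{\bor}$ on $\sI(G)$, which makes the divergence well-defined and delivers uniqueness, is the right detail to spell out.
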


\smallskip
Theorem~\ref{thm:gibbs_maximal} provides a variational characterization
of $\pi^{\bor}$ (and thus the service rate vector $[s_i(\bor)]$).
Since BFE approximates GFE, the (non-rigorous) statistical physics
method suggests that a (local) minimizer or zero-gradient point (if exists) of
$F_B(\nu;\bor)$ can approximate $\pi^{\bor}$ (and $[s_i(\bor)]$).  The
main advantage of studying BFE (instead of GFE) is that
BFE depends only on the first-order marginal probabilities of joint distribution $\nu$,
i.e., its domain complexity is significantly smaller than that of GFE.

Specifically, by letting $y_i=\mathbb{E}[\sigma_i]$ and $\by=[y_i]$, which is
the service rate of link $i$, one can obtain the following expression:
\begin{eqnarray}
F_B (\nu;\bor) &=& -\sum_{i \in \slink} y_i r_i- \sum_{i \in
  \slink}\Big[(d(i)-1)(1-y_i)\log(1-y_i) -y_i\log y_i \Big] \cr 
  & & + \sum_{(i,j)\in E} (1-y_i-y_j)\log(1-y_i-y_j). \label{eq:bethe}
\end{eqnarray}
Namely, $F_B (\nu;\bor)$ is represented by service rate (or marginal probability) vector $\by$. Thus, without loss
of generality, we redefine BFE as a function of $\by$ as
following: $F_B(\by;\bor) = \E(\by;\bor)-H_B(\by),$ where
$\E(\by;\bor) = -\sum_{i \in \slink} y_i r_i$ and $H_B(\by)$ includes the other terms
in \eqref{eq:bethe}.
The
underlying domain $D_B$ of $F_B$ is
\begin{eqnarray}
  \label{eq:bethe_domain}
D_B~=~\{\by: y_i\geq 0, y_i+y_j\leq 1,~\mbox{for all}~(i,j)\in E\}. 
\end{eqnarray}

Hence, a (local) minimizer or zero gradient point $\by$ of
$F_B(\by;\bor)$ under the domain $D_B$ provides a candidate to approximate
$[s_i(\bor)]$, i.e., $y_i\approx s_i(\bor)$. It is known \cite{Yedidia05constructingfree} that the popular Belief Propagation (BP)
algorithm for estimating marginal distributions in MRFs can find the
zero gradient point $\by$ if it converges.
To summarize, the
advantage of studying BFE instead of GFE is that finding service rates
(or marginal distribution) reduces to solving a certain non-linear
system $\nabla F_B(\by;\bor)=0$ or $\nabla \Lambda(\by,\cdot)=0$,
where $\Lambda$ is the Lagrange function of $F_B(\by;\bor).$
Furthermore, one can prove that
there always exists a solution to $\nabla F_B(\by;\bor)=0$ using the
Brouwer fixed-point theorem.

In general, the service rates estimated by BFE do not coincide with
the exact service rates.  We formally define the error for this Bethe
approach as the maximum difference between the estimated rate and
the exact service rate across all links.
\begin{definition}[Bethe Error] \label{def:bethe}
  For a given transmission intensity $\bor,$ the Bethe
  error $e_B$ is defined 
  by: 
  $$e_B (\bor) = \max_{\by : \nabla F_B(\by;\bor)=0}\max_{i\in \slink} | y_i - s_i(\bor) |.$$
\end{definition}
It is not easy to bound the Bethe error for loopy graphs,
since it reduces to analyze the BP error. Despite the hardness of analyzing the BP error, 
BP often shows remarkably strong heuristic performance beyond tree-like graphs. This is
the main reason for the growing popularity of the BP
algorithm, and motivates our approach in this paper. Although there is
no known generic bound on the Bethe error for general graphs, one can prove that the Bethe error goes to 0 in the large-system limit, if the graph has no short cycle and its maximum degree is at most 5, 
i.e., sparse `tree-like' graph. 
For instance, the ring topology is
  an example of such graphs, the Bethe error over the ring
 topology goes to 0 as the number of nodes goes to infinity \cite{KL12ABP}.
The degree 5 condition is due to the known correlation decaying property \cite{chandrasekaran2011counting},
where quantifies the long range correlations in spin systems.



\subsection{BAS: Algorithm using Bethe Free Energy}\label{sec:bethe-approximation}

As discussed in Section \ref{sec:freeenergyforcsma}, an approximate solution to
the stability problem can be obtained by the Bethe free energy function: 
given a target service rate $s_i(\bor)$, s.t. $s_i(\bor) = \lambda_i,$
find the transmission intensity $\bor$ such that $\nabla
F_B(\blambda;\bor)=0$. Motivated by it, we propose the following algorithm: 
\vspace{0.2in}\hrule
\vspace{0.05in} \textbf{Bethe Algorithm for Stability: BAS($\blambda$)}
\vspace{0.05in} \hrule
\vspace{0.05in}

\begin{itemize} 
\item [$\circ$] Through message passing with neighbor links, each link
  $i$ knows $\lambda_j$ for all the neighbor links $j \in \n (i)$
\vspace{0.1in}
\item [$\circ$] Each link $i$ sets its transmission intensity $r_i$:
\begin{equation}\label{eq:b}r_i = \log \left(\frac{\lambda_i(1-\lambda_i)^{d(i) - 1}}{\prod_{j \in \N(i)} (1-\lambda_i - \lambda_j)}\right).
	\end{equation}
\end{itemize} 

\vspace{0.05in}
\hrule\vspace{0.2in}


In {\bf BAS}, a link sets its own transmission intensity based on the its
own and neighbors' arrival rates. With the closed form of equation
\eqref{eq:b}, each link can easily compute the transmission intensity
{\em without} any further iterations. We now state the main property of
{\bf BAS.}

\begin{theorem}
  \label{thm:bethe}
	For the choice of $\bor=[r_i]$ 
	by \eqref{eq:b}, it follows that
	$$\nabla F_B(\blambda;\bor)=0.$$	
\end{theorem}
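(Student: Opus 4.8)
The plan is to prove the statement by a direct coordinate-wise computation of the gradient of $F_B(\by;\bor)$ from its explicit form \eqref{eq:bethe}, imposing the stationarity condition $\partial F_B/\partial y_k = 0$ at $\by=\blambda$, and verifying that solving for $r_k$ reproduces exactly \eqref{eq:b}. Since $F_B$ in \eqref{eq:bethe} decomposes into a linear energy term $-\sum_i y_i r_i$, single-node terms $-\sum_i[(d(i)-1)(1-y_i)\log(1-y_i)-y_i\log y_i]$, and edge terms $\sum_{(i,j)\in E}(1-y_i-y_j)\log(1-y_i-y_j)$, the derivative with respect to a fixed coordinate $y_k$ only sees the $k$-th energy term, the $k$-th single-node term, and the edge terms $(k,j)$ with $j\in\N(k)$. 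This locality is what makes the computation tractable and ultimately distributed.

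The only calculus needed is the elementary identity $\frac{d}{du}(u\log u)=\log u+1$, applied with $u=1-y_k$, $u=y_k$, and $u=1-y_k-y_j$, together with the chain rule (the inner derivatives being $-1$, $+1$, and $-1$ respectively). Carrying this out, I would obtain
\[
\frac{\partial F_B}{\partial y_k} = -r_k + (d(k)-1)\big(\log(1-y_k)+1\big) + \big(\log y_k + 1\big) - \sum_{j\in\N(k)}\big(\log(1-y_k-y_j)+1\big).
\]

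The key simplification, and the one step I would check most carefully, is that all the additive constants cancel: the single-node terms contribute $(d(k)-1)+1 = d(k)$, while the $d(k)$ edge terms each contribute a $-1$, for a total of $-d(k)$. This cancellation is exactly where the combinatorial structure of the interference graph enters, since it relies on vertex $k$ lying on precisely $d(k)=|\N(k)|$ incident edges. Once the constants vanish, setting $\partial F_B/\partial y_k = 0$ yields
\[
r_k = \log y_k + (d(k)-1)\log(1-y_k) - \sum_{j\in\N(k)}\log(1-y_k-y_j) = \log\!\left(\frac{y_k(1-y_k)^{d(k)-1}}{\prod_{j\in\N(k)}(1-y_k-y_j)}\right).
\]

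Finally, I would substitute $\by=\blambda$ so that $y_k=\lambda_k$, observing that the right-hand side is identically the expression \eqref{eq:b}. Hence the choice of $\bor$ in \eqref{eq:b} annihilates every coordinate of the gradient at $\by=\blambda$, establishing $\nabla F_B(\blambda;\bor)=0$. I do not expect any genuine obstacle here: the whole argument is a single-coordinate differentiation, and the only places requiring attention are correctly tracking the signs produced by the chain rule in the edge terms and confirming the constant cancellation driven by the degree count.
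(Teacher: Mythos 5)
Your proposal is correct and is exactly the argument the paper intends: the paper states that Theorem~\ref{thm:bethe} follows "trivially" from \eqref{eq:bethe}, and the omitted computation is precisely your coordinate-wise differentiation, with the constants cancelling via the degree count $(d(k)-1)+1-d(k)=0$, so that stationarity at $\by=\blambda$ solves to give \eqref{eq:b}. No gaps; your derivative expression and the resulting formula for $r_k$ match the paper's.
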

From \eqref{eq:bethe}, it is trivial to prove
Theorem~\ref{thm:bethe}. It is noteworthy that the BFE
function with some $\bor=[r_i]$ may not have any local minima strictly
inside of its domain, which indicates that `estimation-and-update'
using BP or BFE even
may not converge at all whereas {\bf BAS} requires just one computation.

Since the Bethe free energy function does not give the exact
  solution except for tree graphs, $s_i(\bor )$ under {\bf BAS} might
  be less than $\lambda_i$ for some links $i$. To guarantee $s_i(\bor
  ) \ge \lambda_i$ for every link $i$, we can use conventional CSMA
  algorithms such as \cite{JSSW10DRA} and \cite{JW10DC} after {\bf
    BAS}. Since {\bf BAS} is
a sort of `offline' algorithms which does not need estimations on service
rates, BAS can choose `good' initial
transmission intensities for the conventional 
CSMA algorithms to boost up the convergence speeds of  
CSMA algorithms, while guaranteeing the maximal stability.

\section{Utility Maximization}
\label{sec:utilmax}

In this section, we present an approximation algorithm for the network
utility maximization problem \eqref{pro:util}.
To design a distributed algorithm finding transmission intensity
$\bor$ for \eqref{pro:util}, the approaches in literature
\cite{JSSW10DRA,IEEEpaper:Liu_Yi_Proutiere_Chiang_Poor_2009,HP12SO},
instead, considers the following variant of \eqref{pro:util}: for
$\beta>0$,
\begin{align}
 \max_{\bor} & \quad \beta \cdot\sum_{i \in \slink}
 U_i (s_i(\bor)) +H_G(\pi^{\bor}).\label{pro:util-1} 
\end{align}
The proposed algorithms \cite{JSSW10DRA,IEEEpaper:Liu_Yi_Proutiere_Chiang_Poor_2009,HP12SO}
converge to the solution to \eqref{pro:util-1}. Since
the entropy term $H_G(\pi^{\bor})$ is bounded above and below, the solution
to \eqref{pro:util-1} can provide an approximate solution to
\eqref{pro:util} if $\beta$ is large.

\subsection{BUM: Algorithm using Bethe Free Energy}
In BFE functions, the Bethe entropy $H_B(\by)$ is exploited
instead of the Gibbs
entropy $H_G(\pi^{\bor})$, which significantly reduces the complexity
to find a solution. As the BFE functions, we modify \eqref{pro:util-1} as follows:
\begin{align} \label{eq:uopro}
\max_{\y \in D_B} & \quad K_{B}(\by) = \beta \cdot \sum_{i \in \slink} U_i
(y_i) +H_B (\by)
\end{align}
where 
the Bethe entropy allows to replace the term $s_i(\bor)$ by a new variable $y_i,$ and
 the domain constraint $D_B$ given by \eqref{eq:bethe_domain} is necessary to evaluate $H_B(\by)$. 
Once \eqref{eq:uopro} is solved, 
one has to recover $\bor$ from $\by$ such that $s_i(\bor)=y_i$. 
To summarize, our algorithm for utility maximization consists of two phases:
\begin{itemize}
	\item[1.] Run a (distributed) gradient algorithm solving \eqref{eq:uopro} and obtain $\y$.
	\item[2.] Run the {\bf BAS} algorithm to find a transmission intensity $\bor$
	for the target service rate $\by$.
\end{itemize}
The algorithm is formally described in the following:

\vspace{0.2in}\hrule
\vspace{0.05in}
\textbf{Bethe Utility Maximization: BUM}
\vspace{0.05in}
\hrule
\vspace{0.05in}
\begin{itemize}

\item[$\circ$] 
Initially, set $t=1$ and $y_i(1) = 1/4, ~i\in \slink$. \footnote{The
  initial point can be any feasible point in $D_B.$ The point, $y_i =
  1/4$ for all $i$, is such a feasible point. }

\vspace{0.1in}

\item[$\circ$] \underline{\sl Intensity-update based on $\by$.}

 Obtain $(y_j, j \in
  \set{N}(i))$ through message passing with the neighbors, and set transmission intensity
  $r_i (t)$ of link $i$ for time $t$ :
\begin{equation}\label{eq:biu}
r_i (t) =\log \left(\frac{y_i (t) \big( 1-y_i (t) \big)^{d(i) - 1}}{\prod_{j
      \in \N(i)} (1-y_i (t) - y_j (t))}\right).\end{equation}

\vspace{0.1in}

\item[$\circ$] \underline{\sl $\by$-update based on time-varying gradient
    projection.}

 $y_i(t+1)$ is updated for time $t+1$ at each link $i$:
\begin{equation*}
y_i(t+1) =\left[y_i(t) + \frac{1}{\sqrt{t}}\frac{\partial K_B}{\partial
y_i}\Bigg|_{\by(t)} \right]_*,
\end{equation*}
where the projection $[\cdot]_*$ is defined as follows:
\begin{equation*}
[x]_{*}=
\begin{cases}
  c_1(t)& \text{if}~x<c_1(t)\cr
1- \kappa(t) & \text{if}~x> 1- \kappa(t)\cr
x&\text{otherwise}
\end{cases},
 \qquad\kappa(t)=\frac{1-y_i(t)+{\max_{j\in\n(i)}y_j(t)}+c_2(t)}2.
\end{equation*}

\end{itemize}
\hrule\vspace{0.2in}

\noindent{\fbs $\by$-update.} In the $\by$-update phase, each link $i$
updates $y_i$ in a distributed manner based on a gradient-projection
method. However, our projection $[\cdot]_{*}$ is far from a classical
projection, where our projection varies over time (see $c_1(t)$ and
$c_2(t)$), which our algorithm's convergence and distributed operation
critically relies on. 
We delay the discussion on why and how our
special projection contributes to the theoretical performance
guarantee of {\bf BUM}, and first present its feasibility of
distributed operation. Note that the gradient $\frac{\partial
  K_B}{\partial y_i}$ in the $\by$-update phase is:
\begin{align}
  \frac{\partial K_B}{\partial y_i}\Bigg|_{\by(t)}& ~= ~ \beta\cdot
  U_i'(y_i(t))-(d(i)-1)\log(1-y_i(t))\cr &-\log y_i(t)+\sum_{j \in
    \n(i)}\log(1-y_i(t)-y_j(t)),\label{eq:bethegradient}
\end{align}
Indeed, this gradient can be easily obtained by the link $i$ via local
message passing only with its neighbors. Since 
$\by(t)$ has to be an interior point of $D_B$ for computing the gradient
  \eqref{eq:bethegradient}, a projection is necessary in {\bf BUM}.

\smallskip
\noindent{\fbs Performance guarantee.} 
We now establish the theoretical performance guarantee of {\bf BUM} for
the popular class of $\alpha$-fair utility functions \cite{mowa00}, i.e.,
$$U_i(x) = \begin{cases}
\log x&\mbox{if}~\alpha =1\\
\frac{x^{1-\alpha}}{1-\alpha}&\mbox{otherwise}\end{cases}.$$ The parameter $\alpha$ represents
the degree of fairness for the throughput allocation: when $\alpha = 0,$
the total link throughput is maximized; $\alpha=1$ gives the
proportional fair allocation when $\alpha \rightarrow \infty,$ it
corresponds to the max-min fairness.

Let $\by^*$ be an optimum point of $K_B$, i.e., 
$\by^* = \arg \max\limits_{\by \in D_B}K_B(\by).$ 
The following theorem shows that, for any given $\alpha,$ with
sufficiently large $\beta,$ $K_B(\by(t))$ by {\bf BUM} always
converges to $K_B(\by^*)$ in polynomially large enough time $T$ with resepct to $n$.
\begin{theorem} \label{thm:bethe-con} Let $\mu$ be a probability
  distribution on $\{1,\ldots, T\},$ such that
$$\mu(t)=\frac{t^{-1/2}}{\sum_{s=1}^{T} s^{-1/2}}\qquad\mbox{for}~ t\in \{1,\dots, T\}.$$
Then, if  $\beta > 2d / \alpha$, 
\begin{equation}\lim_{t\to \infty} \max\{c_1(t), c_2(t) \} = 0 \quad\mbox{and}\quad\lim_{t\to \infty}\frac{1}{\sqrt{t}}\frac{(c_1(t))^{-\alpha} - \log
  \big( c_1(t)c_2(t) \big)}{\min\{c_1(t),c_2(t)\}} = 0,\label{eq:c1c2}\end{equation}
it follows that
\begin{equation}
\label{eq:bound}
\mathbb{E}\left[K_B(\by^*)- K_B(\by(t)) \right] \leq 
O\left(\frac{n \log T}{\sqrt{T}}\right),
\end{equation}
where the expectations are taken over the distribution $\mu$.
\end{theorem}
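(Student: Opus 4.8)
The plan is to read \eqref{eq:bound} as a convergence guarantee for projected (sub)gradient ascent applied to the concave program \eqref{eq:uopro}, and then to push the classical averaging analysis through while carefully absorbing the effect of the nonstandard, time-varying projection $[\cdot]_*$. First I would record the structural facts I am allowed to assume: for $\alpha$-fair utilities and $\beta>2d/\alpha$, $K_B$ is concave on $D_B$, so its maximizer $\by^*$ is unique, and by the interior-optimum property $(\ddagger)$ there is a fixed $\delta>0$ with $y_i^*\ge \delta$ and $y_i^*+y_j^*\le 1-\delta$ for every edge $(i,j)\in E$. The ascent direction is the gradient \eqref{eq:bethegradient}, which is well defined exactly when $\by(t)$ is strictly interior to $D_B$; keeping the iterates interior is the sole purpose of the clip $[\cdot]_*$.

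Next I would verify feasibility, i.e.\ that every iterate stays strictly inside $D_B$ so that \eqref{eq:bethegradient} is always meaningful. The lower clip at $c_1(t)$ gives $y_i(t)\ge c_1(t)>0$. For the upper faces the key is the midpoint form $1-\kappa(t)=\tfrac12\big(y_i(t)+1-\max_{j\in\n(i)}y_j(t)-c_2(t)\big)$: since coordinate $i$ can move at most halfway from its current value toward the feasibility limit $1-\max_{j}y_j(t)-c_2(t)$, one checks that $y_i(t+1)+y_j(t+1)\le 1-c_2(t)$ for each edge, so the iterates never touch the $y_i+y_j=1$ faces. This damping is precisely what replaces a Euclidean projection, which would be unusable because $\bg(t)=\nabla K_B(\by(t))$ blows up logarithmically (and through $U_i'(y_i)=y_i^{-\alpha}$) as $\by\to\partial D_B$. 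Since $\max\{c_1(t),c_2(t)\}\to 0$ by the first condition in \eqref{eq:c1c2}, there is a finite $T_0$ after which $c_1(t)<\delta$ and $c_2(t)<\delta$, so $\by^*$ lies inside the feasible box used at step $t$; whenever $y_i^*$ lies in the clip interval the coordinatewise clip is nonexpansive toward $\by^*$, and the steps on which this fails will be charged to an error term handled below.

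The core estimate is then the one-step inequality. Writing $\eta_t=1/\sqrt t$ and using concavity, $\langle \bg(t),\by^*-\by(t)\rangle\ge K_B(\by^*)-K_B(\by(t))$, so combining the gradient step with the (approximate) nonexpansiveness of $[\cdot]_*$ gives
\[
\|\by(t+1)-\by^*\|^2 \le \|\by(t)-\by^*\|^2 - 2\eta_t\big(K_B(\by^*)-K_B(\by(t))\big) + \eta_t^2\|\bg(t)\|^2 + \varepsilon(t),
\]
where $\varepsilon(t)$ collects the projection error. Summing over $t$, telescoping, discarding $-\|\by(T+1)-\by^*\|^2\le 0$, and dividing by $2\sum_{t=1}^T\eta_t=\Theta(\sqrt T)$ turns the left side into $\mathbb{E}_\mu[K_B(\by^*)-K_B(\by(t))]$, since $\mu(t)=\eta_t/\sum_s \eta_s$. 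The initialization contributes $\|\by(1)-\by^*\|^2/(2\sum_t\eta_t)=O(n/\sqrt T)$. For the gradient term, the clip confines $\by(t)$ to $[c_1(t),1-\kappa(t)]$, so each coordinate of $\bg(t)$ is $O\big((c_1(t))^{-\alpha}-\log(c_1(t)c_2(t))\big)$; I would then show that $\sum_t\eta_t^2\|\bg(t)\|^2$ and $\sum_t\varepsilon(t)$, once divided by $\Theta(\sqrt T)$, are $O(n\log T/\sqrt T)$, using the two limits in \eqref{eq:c1c2}. The second condition, carrying the factor $1/\min\{c_1(t),c_2(t)\}$, is exactly what forces the per-step projection/gradient error to be asymptotically negligible relative to $\eta_t$. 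Assembling the three pieces yields \eqref{eq:bound}.

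The main obstacle is the projection $[\cdot]_*$ itself: it is neither Euclidean nor onto a fixed set, and its upper threshold depends on the current iterate, so the nonexpansiveness property that underlies every subgradient-averaging proof is not available off the shelf. The crux is reconciling two competing demands — the clip must stay far enough from $\partial D_B$ to tame the $y^{-\alpha}$ and logarithmic blow-up of $\bg(t)$, yet it must open up fast enough ($c_1,c_2\to 0$) that it eventually contains the interior optimum $\by^*$ and stops distorting the descent. Proving that the residual error $\sum_t\varepsilon(t)$ is genuinely lower order than the main $O(n\log T/\sqrt T)$ term is where the precise hypotheses \eqref{eq:c1c2}, rather than a generic diminishing–step–size argument, are indispensable, and I expect the bulk of the technical work to lie there.
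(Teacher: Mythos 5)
Your high-level skeleton (concavity of $K_B$ for $\beta>2d/\alpha$, a one-step expansion of $\|\by(t+1)-\by^*\|^2$, telescoping, and dividing by $\sum_t t^{-1/2}=\Theta(\sqrt{T})$ so that the weights become exactly $\mu$) is the same as the paper's final computation. But there is a genuine gap in the middle, and it is precisely the part you defer: how the projection and the gradient magnitude are controlled. You propose to (i) treat $[\cdot]_*$ as ``approximately nonexpansive'' with a per-step error $\varepsilon(t)$ to be bounded later, and (ii) bound each coordinate of the gradient by its worst-case value on the clipped region, namely $G(t)=(c_1(t))^{-\alpha}-\log\big(c_1(t)c_2(t)\big)$. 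Neither step can be completed as stated. For (ii): condition \eqref{eq:c1c2} only guarantees $G(t)=o\big(\sqrt{t}\,\min\{c_1(t),c_2(t)\}\big)$, which permits $c_1(t),c_2(t)$ to decay polynomially and hence $G(t)$ to grow polynomially in $t$ (e.g.\ $c_1(t)=t^{-\epsilon_1}$ with $\alpha\epsilon_1$ close to $1/2$). Then $\sum_t t^{-1}G(t)^2$ is polynomial in $T$, not $O(\log T)$, and after dividing by $\Theta(\sqrt{T})$ you do not recover the claimed $O(n\log T/\sqrt{T})$ rate. For (i): the clip threshold $1-\kappa(t)$ depends on the current iterate and its neighbors, so $\by^*$ need not lie in the box used at step $t$ even for large $t$, and nonexpansiveness toward $\by^*$ genuinely fails; bounding $\sum_t\varepsilon(t)$ is not a routine residual estimate but the heart of the matter.

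The paper closes both holes with a single, different idea that your proposal is missing: the gradient field of $K_B$ itself repels the iterates from the boundary. Concretely, it proves sign lemmas (if $y_i$ is within $2\varepsilon_2$ of $1$, or $y_i+y_j$ within $2\varepsilon_3$ of $1$, then $\partial K_B/\partial y_i<0$; if $y_i\le 2\varepsilon_1$ then $\partial K_B/\partial y_i>0$) together with the fact that the step $\tfrac{1}{\sqrt{t}}|\partial K_B/\partial y_i|$ eventually falls below $\tfrac12\min\{c_1(t),c_2(t)\}$ (this is exactly what the second limit in \eqref{eq:c1c2} buys). Combining these yields the key lemma: for all $t$ the iterates stay in a \emph{fixed} compact set $D^*_B$ whose margins $\delta_1,\delta_2,\delta_3$ depend only on $\alpha,\beta,d$ and not on $t$, and moreover the projection is literally \emph{inactive} for all $t\ge t_\delta$, so the update is an exact unconstrained gradient step ($\varepsilon(t)\equiv 0$ there, and the finitely many earlier steps contribute a constant). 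On $D^*_B$ each gradient coordinate is $O(1)$ uniformly, so $\|\nabla K_B(\by(t))\|_2^2=O(n)$, and $\sum_t t^{-1}\|\nabla K_B(\by(t))\|_2^2=O(n\log T)$, which is where the $\log T$ in \eqref{eq:bound} actually comes from. Without this ``projection is eventually never hit, iterates live in a time-independent interior region'' lemma, your time-varying bounds cannot deliver the stated rate.
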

The proof of the above theorem is given in Section~\ref{sec:pfthm:bethe-con}.
Our key intuition underlying the proof is that 
the projection $[\cdot]_*$ of {\bf BUM}
is designed so that 
the
updating $\boldsymbol{y}(t)$ never hits the projection boundary of $D_B$ after
a time instance $t^{*}$. Then, one can observe that 
the
algorithm behaves as a gradient algorithm without a projection
after time $t^{*}$, and hence it is possible to analyze its convergence using traditional techniques.
We note that for $\beta > 2d/ \alpha,$ $\by(t)$ always
converges to the unique $\by^*$ when \eqref{eq:c1c2} holds, since
$K_B$ is a (strictly) concave function. There exist many paris of
$(c_1(t),c_2(t))$ satisfying \eqref{eq:c1c2}, \eg,~
$$c_1(t) = - C_1 \log c_2(t),\qquad c_2(t) =C_2 t ^{-\gamma},$$ where $C_1$
and $C_2$ are some constants and $0<\gamma < 1/2$. 
The following theorem further bounds the gap between the achieved utility of
{\bf BUM} and the maximum utility. 
\begin{theorem} \label{thm:bethe-error} The transmission
  intensity $$\bor^*:=\left[\log \left(\frac{y_i^* \big( 1-y_i^*
        \big)^{d(i) - 1}}{\prod_{j \in \N(i)} (1-y_i^* -
        y_j^*)}\right)\right]$$ satisfies
\begin{equation*}
\label{errorofBUM}
\max_{\bx \in C(G)} \sum_{i \in \slink}U_i(x_i) - \sum_{i \in \slink}U_i\Big(s_i(\bor^{*})\Big) \leq 
\sum_{i \in \slink} \frac{e_B(\bor^{*})}{s_i
(\bor^{*})^{\alpha}} +\frac{n\log 2}{\beta}.
\end{equation*}
\end{theorem}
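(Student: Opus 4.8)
The plan is to control the two distinct sources of suboptimality separately by splitting the gap at the Bethe optimizer $\by^*=\arg\max_{\by\in D_B}K_B(\by)$. Writing $\bx^*=\arg\max_{\bx\in C(G)}\sum_i U_i(x_i)$, I decompose
\[
\sum_i U_i(x_i^*)-\sum_i U_i\big(s_i(\bor^*)\big)=\underbrace{\Big[\textstyle\sum_i U_i(x_i^*)-\sum_i U_i(y_i^*)\Big]}_{A}+\underbrace{\Big[\textstyle\sum_i U_i(y_i^*)-\sum_i U_i(s_i(\bor^*))\Big]}_{B},
\]
and aim to bound $A$ by $n\log 2/\beta$ (the entropy-regularization gap coming from finite $\beta$) and $B$ by $\sum_i e_B(\bor^*)/s_i(\bor^*)^\alpha$ (the gap between the Bethe target $\by^*$ and the true service rate achieved by $\bor^*$).

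For the term $B$, the key observation is that $\bor^*$ is exactly the \textbf{BAS} intensity for the target $\by^*$, so by Theorem~\ref{thm:bethe} we have $\nabla F_B(\by^*;\bor^*)=0$; thus $\by^*$ is a zero-gradient point of $F_B(\cdot;\bor^*)$ and Definition~\ref{def:bethe} gives $y_i^*-s_i(\bor^*)\le |y_i^*-s_i(\bor^*)|\le e_B(\bor^*)$ for every $i$. Since each $U_i$ is concave with $U_i'(x)=x^{-\alpha}>0$, the tangent-line inequality $U_i(y_i^*)-U_i(s_i(\bor^*))\le U_i'(s_i(\bor^*))\,(y_i^*-s_i(\bor^*))\le e_B(\bor^*)/s_i(\bor^*)^\alpha$ holds, and summing over $i\in\slink$ yields the desired bound on $B$.

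For the term $A$, I first note that each schedule $\bsigma\in\sI(G)$ satisfies the constraints defining $D_B$ and that $D_B$ is convex, so $C(G)=\mathrm{conv}(\sI(G))\subseteq D_B$ and in particular $\bx^*\in D_B$. Optimality of $\by^*$ for $K_B$ then gives $\beta\sum_i U_i(y_i^*)+H_B(\by^*)\ge \beta\sum_i U_i(x_i^*)+H_B(\bx^*)$, i.e. $A\le \tfrac1\beta\big(H_B(\by^*)-H_B(\bx^*)\big)$. Using the representation $H_B(\by)=\sum_{i\in\slink}\mathbb H(\sigma_i)-\sum_{(i,j)\in E}\mathbb I(\sigma_i;\sigma_j)$ from Definition~\ref{def:energy} together with $\mathbb I(\sigma_i;\sigma_j)\ge 0$, I immediately obtain $H_B(\by^*)\le \sum_i\mathbb H(\sigma_i)\le n\log 2$. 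If I can also show $H_B(\bx^*)\ge 0$, then $A\le n\log 2/\beta$, and combining with the bound on $B$ proves the theorem.

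The main obstacle is precisely this lower bound $H_B\ge 0$ on $C(G)$. It is genuinely nontrivial: a Hessian computation shows $H_B$ is concave only on the region $\{y_i\le 1/d(i)\}$ and not on all of $D_B$, so the naive ``a concave function is minimized at a vertex'' argument fails, even though $H_B$ does vanish at every integral vertex $\bsigma\in\sI(G)$ (all singleton and pairwise entropies are zero there). Moreover the spanning-tree decomposition $H_B=H_B^{\mathrm{tree}}-\sum_{\text{non-tree }(i,j)}\mathbb I(\sigma_i;\sigma_j)$ expresses $H_B$ as a nonnegative tree entropy \emph{minus} nonnegative terms, so it is not manifestly nonnegative and the claim must use that $\bx^*\in C(G)$ rather than merely $\bx^*\in D_B$. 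The route I would pursue exploits that points of $C(G)$ are \emph{realizable} marginals: for $\by\in C(G)$ there is a distribution $\nu$ on $\sI(G)$ whose single-link marginals are $y_i$ and whose pairwise marginals are forced (by the hard-core constraint) to equal exactly those entering $H_B$. Hence $H_B(\by)$ is an honest Bethe entropy built from consistent, realizable marginals, and I would establish its non-negativity on $C(G)$ using the special hard-core structure (property ($\dagger$)); proving this cleanly is the crux, while the decomposition above is elementary.
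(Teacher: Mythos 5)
Your proposal follows the paper's proof essentially step by step: the same split at $\by^*$ into $A$ and $B$, the same bound $B\le e_B(\bor^*)\sum_i s_i(\bor^*)^{-\alpha}$ (via Theorem~\ref{thm:bethe}, which makes $\by^*$ a zero-gradient point of $F_B(\cdot\,;\bor^*)$ so that Definition~\ref{def:bethe} applies, plus the tangent inequality for the concave $\alpha$-fair utilities), and the same strategy for $A$ (namely $C(G)\subseteq D_B$, optimality of $\by^*$ over $D_B$, and $H_B\le n\log 2$). These are exactly steps $(b)$--$(d)$ in Section~\ref{sec:proof-theor-refthm:b}. Where you go beyond the paper is in making explicit what its step $(b)$ --- the assertion $\beta\sum_i U_i(x_i^*)\le K_B(\by^*)$, stated there without justification --- actually requires, namely $H_B(\bx^*)\ge 0$ at the true optimizer $\bx^*\in C(G)$. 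You have correctly isolated the load-bearing unproved step, and you are right that it is the crux.

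But the repair you propose --- proving $H_B\ge 0$ on $C(G)$ from realizability of the marginals --- cannot succeed, because the statement is false. Take $G=K_{3,3}$ (so $n=6$, $|E|=9$, $d=3$) with sides $L$ and $R$, and $\bx=(1/2,\dots,1/2)=\frac12\ind_L+\frac12\ind_R\in C(G)$. On every edge $(i,j)$ the hard-core structure forces the pairwise marginal $(1-x_i-x_j,\,x_i,\,x_j)=(0,\tfrac12,\tfrac12)$, i.e.\ $\sigma_j=1-\sigma_i$ almost surely, so $\mathbb{I}(\sigma_i;\sigma_j)=\log 2$ and $H_B(\bx)=\sum_i \mathbb{H}(\sigma_i)-\sum_{(i,j)\in E}\mathbb{I}(\sigma_i;\sigma_j)=6\log 2-9\log 2=-3\log 2<0$. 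Realizability does not help: $H_B(\bx)$ is not the entropy of the realizing distribution (which is $\log 2\ge 0$); on dense graphs the Bethe functional over-subtracts edge mutual informations, and property~($\dagger$) only pins down the pairwise marginals, it does not restore non-negativity. Worse, by symmetry and strict concavity this $\bx$ \emph{is} the $\alpha$-fair optimizer $\bx^*$, so the inequality you need fails exactly where you need it; correspondingly one can check (say $\alpha=1$, $\beta=7>2d/\alpha$) that $\max_{\by\in D_B}K_B(\by)\approx -31.2 < -42\log 2\approx -29.1=\beta\sum_i U_i(x_i^*)$, i.e.\ the paper's own step $(b)$ is false on this instance. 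So the gap you flagged is real, it is shared by the paper's proof, and it cannot be closed by your route; what this argument can deliver is only the weaker bound obtained from $\mathbb{I}(\sigma_i;\sigma_j)\le\log 2$, namely $H_B\ge -|E|\log 2$ on $C(G)$ and hence a constant $(n+|E|)\log 2/\beta$ in place of $n\log 2/\beta$, while the theorem with the stated constant remains, as far as this line of reasoning goes, unproved (the $K_{3,3}$ example does not contradict its conclusion, since there $s_i(\bor^*)\approx 1/2$ and the left-hand side is nearly $0$).
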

The proof of the above theorem is given in Section~\ref{sec:proof-theor-refthm:b}.
We recall that $e_B(\bor^*)$ is the Bethe error with
  transmission intensities $\bor^*$ which is defined in Definition~\ref{def:bethe}. 
As we mentioned earlier, the Bethe error $e_B(\bor^*)$ is small\footnote{In particular, $e_B(\bor^*)=0$ 
if the interference graph is a tree.} empirically 
in many applications \cite{Yedidia05constructingfree,F01CG,MWJ99LB}, and then 
the remaining error term is negligible for large $\beta$.

\subsection{Comparison with Prior Approach}\label{sec:comp-with-prior}

In \cite{JSSW10DRA,IEEEpaper:Liu_Yi_Proutiere_Chiang_Poor_2009},
gradient based algorithms solve \eqref{pro:util-1}. In this section,
we denote by {\bf JW} and {\bf EJW} (the names are used in
\cite{HP12SO}) the algorithms in \cite{JSSW10DRA} and
\cite{IEEEpaper:Liu_Yi_Proutiere_Chiang_Poor_2009},
respectively. Technically, the algorithms take the dual problem of
\eqref{pro:util-1} where transmission intensity $r_i$ is Lagrangian
multiplier and $U'^{-1} \Big(\frac{r_i(t)}{\beta} \Big) -
{s}_i(\bor(t))$ is the gradient of the dual problem \eqref{pro:util-1} for
$r_i$. Thus, transmission intensities are commonly described as the
following distributed iterative procedure:
\begin{eqnarray}
  \label{convex-csma-update}
  r_i(t+1) = r_i(t) + \alpha_i(t)\left ( U'^{-1}
    \Big(\frac{r_i(t)}{\beta} \Big)- s_i(\bor(t)) \right ),
\end{eqnarray}
where $\alpha_i(t)>0$ is the step size of link $i$. In both
schemes, $\alpha_i(t) = 1/t$ which guarantees the convergence of
$r_i(t)$. However,
to update $r_i(t+1)$ as per \eqref{convex-csma-update}, $s_i(\bor(t))$ is
hard to compute. For the issue,
a empirical service rate $\hat{s}_i(t)$ has been used instead of
${s}_i(\bor(t))$. 

The authors in \cite{JSSW10DRA} take a
large and increasing length of intervals (i.e., $r_i(t)$ is fixed during each
interval) so that ${s}_i(\bor(t))$ can be estimated well by its empirical estimation $\hat{s}_i(t)$ at the end of
each interval. On the other hand, the authors in
\cite{IEEEpaper:Liu_Yi_Proutiere_Chiang_Poor_2009}, with a fixed
length of intervals (which does not have to be very large), use the
empirical estimation $\hat{s}_i(t).$ By stochastic approximation, with
sufficiently large $T,$
$$\lim_{t \rightarrow \infty}r_i(t+T) -r_i(t) =  \sum_{j=t}^{t+T} \alpha(j)\left( U'^{-1}
    \Big(\frac{r_i(j)}{\beta} \Big)-
s_i(\bor(j)) \right).$$  
Both approaches, however, suffer from slow convergence: the updating interval should be extremely large
in \cite{JSSW10DRA} and $\alpha_i(t)$ should be extremely small in
\cite{IEEEpaper:Liu_Yi_Proutiere_Chiang_Poor_2009} for $\hat{s}_i(t) \approx
{s}_i(\bor(t))$.

In \cite{HP12SO}, the authors propose an algorithm called Simulated
Steepest Coordinate Ascent ({\bf SSCA}) algorithm converging to the same point
with the above two algorithms, where the algorithm is not a gradient based
approach but a steepest ascent based algorithm. In SSCA scheme, at each
iteration $t$, link $i$ sets transmission intensity as $r_i (t) =
\beta U' ( \frac{1}{t}\sum_{j=1}^{t} \hat{s}_i(j)).$ Then,
$\pi^{\br}_{\bsigma}$ is maximized at
$\sigma^* := \arg
\max_{\sigma \in \sI(G)} \sum_{i \in \slink} \sigma_i
U'\left(\frac{1}{t}\sum_{j=1}^{t} \hat{s}_i(j)\right),$ 
which is the
exact steepest ascent direction.  As the steepest ascent algorithms
converge to the optimal service rates in many applications, the SSCA
algorithm makes the service rates converge to the optimal rates
quickly, compared to the gradient based algorithms.  To guarantee the
convergence, however, SSCA algorithm may still have to spend extremely large
iterations since schedules are stochastically selected over time.

\subsection{Proof of Theorem \ref{thm:bethe-con}}\label{sec:pfthm:bethe-con}

We first give an overview for the proof of Theorem \ref{thm:bethe-con}.
The formal complete proof will follow.

\smallskip
\noindent{\fbs Overview of the proof of Theorem \ref{thm:bethe-con}.}
We first prove that the function $K_B$ is concave for large enough $\beta$,
stated as follows.

\begin{lemma}
When $\beta \ge 2d/\alpha,$ $K_B (\by)$ is concave.\label{LEM:CONCAVE}
\end{lemma}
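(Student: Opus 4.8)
The plan is to prove that the Hessian of $K_B$ is negative definite throughout the interior of $D_B$, which gives (strict) concavity and hence the lemma. First I would differentiate the gradient \eqref{eq:bethegradient} once more to obtain the Hessian $\bm{H}$ of $K_B$: its diagonal entries are
$$H_{ii}=\beta U_i''(y_i)+\frac{d(i)-1}{1-y_i}-\frac{1}{y_i}-\sum_{j\in\N(i)}\frac{1}{1-y_i-y_j},$$
with $\beta U_i''(y_i)=-\beta\alpha\,y_i^{-\alpha-1}$, and its only nonzero off-diagonal entries are $H_{ij}=-\frac{1}{1-y_i-y_j}$ for $(i,j)\in E$. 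Writing the quadratic form of $-\bm{H}$ and collecting the edge-weighted terms, the squares and the cross terms combine into a (nonnegative) signless-Laplacian form, so that for every $\bm{v}=[v_i]$
$$\bm{v}^{\top}(-\bm{H})\bm{v}=\sum_{i\in\slink}\Big(\beta\alpha\,y_i^{-\alpha-1}+\frac{1}{y_i}-\frac{d(i)-1}{1-y_i}\Big)v_i^2+\sum_{(i,j)\in E}\frac{(v_i+v_j)^2}{1-y_i-y_j}.$$
All quantities are finite and smooth in the interior of $D_B$, where $0<y_i<1$ and $y_i+y_j<1$.

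Next I would invoke the hypothesis $\beta\ge 2d/\alpha$. Since $0<y_i\le 1$ and $\alpha>0$ we have $y_i^{-\alpha}\ge 1$, hence $\beta\alpha\,y_i^{-\alpha-1}\ge 2d\,y_i^{-1}$; this is the clean way the strong concavity of the utility enters. The only obstruction to $-\bm{H}\succeq 0$ is the term $\frac{d(i)-1}{1-y_i}$, which blows up as $y_i\to1$ and cannot be absorbed by vertex $i$'s own diagonal (the latter stays bounded there). The key idea is that this blow-up must be paid for by the incident edges and the neighbors, whose utility terms blow up in compensation, because $y_i\to 1$ forces every neighbor to have $y_j\to 0$. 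Concretely, I would bound $\frac{d(i)-1}{1-y_i}\le\frac{d(i)}{1-y_i}=\sum_{j\in\N(i)}\frac{1}{1-y_i}$ and distribute the budget $\frac{2d}{y_i}v_i^2\le\beta\alpha\,y_i^{-\alpha-1}v_i^2$ evenly over the $d(i)\le d$ incident edges, each receiving at least $\frac{2}{y_i}v_i^2$. This reduces the desired bound $\bm{v}^{\top}(-\bm{H})\bm{v}\ge\sum_i y_i^{-1}v_i^2>0$ to a single per-edge inequality.

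That per-edge inequality is
$$\frac{(v_i+v_j)^2}{1-y_i-y_j}+\frac{2}{y_i}v_i^2+\frac{2}{y_j}v_j^2\ \ge\ \frac{v_i^2}{1-y_i}+\frac{v_j^2}{1-y_j},$$
a $2\times 2$ positive-semidefiniteness statement in $(v_i,v_j)$. Its diagonal entries are easily checked to be positive (using $1-y_i-y_j\le 1-y_i$ and $1-y_i-y_j\le 1-y_j$), so everything comes down to the determinant. Setting $s=y_i$, $t=y_j$, and $u=1-y_i-y_j>0$, the determinant, after clearing the positive factor $u\,s(1-s)t(1-t)$, collapses to $4u^2+6stu+st\ge 0$. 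This manifestly nonnegative factorization is the heart of the argument, and the strict inequality it provides even yields \emph{strict} concavity in the interior (extending to the closure by continuity). I expect this determinant computation, together with the bookkeeping that cleanly isolates it as a per-edge statement, to be the main obstacle; the Hessian differentiation and the signless-Laplacian rearrangement are routine.
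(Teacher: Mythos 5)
Your proposal is correct; I verified the two places where it could break. The regrouping $\bm{v}^{\top}(-\bm{H})\bm{v}=\sum_{i}\bigl(\beta\alpha y_i^{-\alpha-1}+y_i^{-1}-\tfrac{d(i)-1}{1-y_i}\bigr)v_i^2+\sum_{(i,j)\in E}\tfrac{(v_i+v_j)^2}{1-y_i-y_j}$ is valid, the budget-splitting works because $\beta\alpha\geq 2d\geq 2d(i)$ gives each of the $d(i)$ incident edges at least $2v_i^2/y_i$ while leaving $v_i^2/y_i$ at the vertex, and the determinant identity is right: with $s+t+u=1$ the numerator of $AB-C^2$ equals $u\bigl(4u^2+6stu+st\bigr)$, so the per-edge $2\times 2$ form is positive definite. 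The paper's proof has the same skeleton (compute the Hessian, then reduce negative definiteness to a per-edge statement) but a different per-edge certificate: it orders each edge so that $y_i\leq y_j$ (hence $y_i\leq 1/2$), bounds the diagonal entry at the smaller endpoint by $-\sum_{j}\frac{1-y_j}{y_i}\cdot\frac{1}{1-y_i-y_j}$ (this is where the utility curvature and $\beta\geq 2d/\alpha$ are used) and at the larger endpoint by $-\sum_{j}\frac{y_j}{1-y_i}\cdot\frac{1}{1-y_i-y_j}$, and then exploits that the two per-edge weights $\frac{1-y_j}{y_i}$ and $\frac{y_i}{1-y_j}$ multiply to exactly $1$, so each edge's contribution collapses into the perfect square $-\frac{1}{1-y_i-y_j}\bigl(\sqrt{\tfrac{1-y_j}{y_i}}\,x_i+\sqrt{\tfrac{y_i}{1-y_j}}\,x_j\bigr)^2\leq 0$. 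In short, the paper trades your determinant algebra for an exact square completion, at the cost of a case split on $y_i<1/2$ versus $y_i\geq 1/2$; your symmetric treatment avoids that case split and, unlike the paper's argument, delivers a quantitative strong-concavity modulus $\bm{v}^{\top}(-\bm{H})\bm{v}\geq\sum_i v_i^2/y_i$, which is strictly more information than the lemma asks for and could be reused, e.g., to sharpen the convergence analysis in Theorem~\ref{thm:bethe-con}.
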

\begin{proof}
The proof  is presented in Appendix.
\end{proof}
We note that $K_B$ is not obvious to be concave (or convex) since the Bethe entropy term $H_B$
(in the expression of $K_B$)
is neither concave nor convex. In essence, we observe that the non-concave term $H_B$
is compensated by the concave term $\beta \cdot \sum_{i \in \slink} U_i
(y_i)$ for large enough $\beta$. 

The concavity property of $K_B$ might allow to use known convex optimization tools such as the interior-point method, the Newton's
method, the ellipsoid method, etc.  However, these algorithms are not
easy to implement in a distributed manner, and it is still far from
being clear whether a simple distributed gradient algorithm can solve \eqref{eq:uopro} (in a polynomial number
of iterations) since the optimization is `constrained', i.e., $y_i\geq
0$ and $y_i+y_j\leq 1$ for $(i,j)\in E$. Thus, we carefully design the
dynamic projection $[\cdot]_{*},$ where $c_1(t)$ and $c_2(t)$ enforce
$\by(t)$ to be strictly inside of $D_B.$ Lemma~\ref{lem:betheupdate}
is the key lemma of this proof, where
we show that 
$c_1(t)<y_i(t)<1-c_2(t)$ after large enough $t$. Since
the algorithm does not hit the `boundary' of $[\cdot]_{*}$ anymore
after large enough updates, {\bf BUM} acts like a gradient algorithm in `unconstrained' optimization.
\begin{lemma} For all time $t,$
 $\by(t)=[y_v(t)]\in D^*_B$, where
\begin{align*}D^*_B:=\{\by=[y_v]:y_v\in[\delta_1,1-\delta_2]&~~\mbox{and}~~  y_u+y_v\leq 1-\delta_3, \mbox{for all}~ 
 (u,v)\in E\}.\end{align*}
where
\begin{align*}
\delta_2&:=\min\left \{ c_2(t_*), \frac{1}{2 (\exp(\beta2^{\alpha})+1)}\right \},
\quad \delta_1:=\min  \left\{c_1(t_*),\frac{\beta2^{\alpha}\delta_2^d}{4(1+\beta2^{\alpha}d\delta_2^{d-1})}
  \right\}, \cr
\delta_3&:=\min\left \{ c_2(t_*), \frac{\delta_1}{2
    \exp(\beta \delta_1^{-\alpha})} \right\},\end{align*}
and $$t_* := \inf \left\{ \tau :  \frac{1}{\sqrt{t}}\left|\frac{\partial K_B (\by(t))}{\partial
    y_i}\right| < \frac{1}{2}\min\{c_1(t),c_2(t)\} ~~\forall~ t \geq\tau\right\}.$$
\label{lem:betheupdate}
\end{lemma}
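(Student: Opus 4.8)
The plan is to establish the invariant $\by(t)\in D^*_B$ by induction on $t$, writing $g_i(t):=\frac{\partial K_B}{\partial y_i}\big|_{\by(t)}$ for the gradient in \eqref{eq:bethegradient}. The base case $t=1$ is immediate: $y_i(1)=1/4$ gives $y_i(1)\in[\delta_1,1-\delta_2]$ and $y_i(1)+y_j(1)=1/2\le 1-\delta_3$, since each margin is a small positive quantity (bounded by $1/4$, resp.\ by $1/2$ for $\delta_3$). For the inductive step I would treat the regimes $t\le t_*$ and $t> t_*$ separately, the crossover $t_*$ being exactly the instant after which $\frac1{\sqrt t}|g_i(t)|<\tfrac12\min\{c_1(t),c_2(t)\}$.

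The first ingredient is a pair of purely geometric facts about $[\cdot]_*$ valid at every step. The floor gives $y_i(t+1)\ge c_1(t)$, and the cap gives $y_i(t+1)\le 1-\kappa(t)=\tfrac12\big(1+y_i(t)-\max_{k\in\n(i)}y_k(t)-c_2(t)\big)$. Summing the cap across an edge $(i,j)$ and using $\max_{k\in\n(i)}y_k(t)\ge y_j(t)$ together with the symmetric bound at $j$, the cross terms cancel and one gets $y_i(t+1)+y_j(t+1)\le 1-c_2(t)$ \emph{unconditionally}; thus the cap is engineered precisely to enforce the edge faces of $D_B$ with margin $c_2(t)$. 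In the regime $t\le t_*$ these projection margins already dominate the targets: since $c_1,c_2$ are non-increasing with $c_1(t)\ge c_1(t_*)\ge\delta_1$ and $c_2(t)\ge c_2(t_*)\ge\max\{\delta_2,\delta_3\}$, the floor yields $y_i(t+1)\ge\delta_1$, the edge bound yields $y_i(t+1)+y_j(t+1)\le 1-\delta_3$, and the cap with the hypothesis $y_i(t)\le 1-\delta_2$ gives $y_i(t+1)\le 1-\tfrac12(\delta_2+c_2(t))\le 1-\delta_2$.

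For $t> t_*$ the floor and cap have shrunk below the target margins, so the invariant must be carried by the logarithmic barrier built into $g_i$. Here the step is small, $\frac1{\sqrt t}|g_i(t)|<\tfrac12\min\{c_1(t),c_2(t)\}$, so $\by$ moves a controlled amount and it suffices to verify that on each face of $D^*_B$ the gradient points strictly inward; the explicit thresholds are exactly where this inward force wins. On the upper face $y_i=1-\delta_2$ I would bound $g_i(t)$ from above: the edge terms satisfy $\log(1-y_i-y_j)\le\log\delta_2$ (using only $y_j\ge0$), so their sum $d\log\delta_2$ overpowers both the utility push $\beta y_i^{-\alpha}\le\beta 2^\alpha$ and the entropy push $-(d(i)-1)\log(1-y_i)=(d(i)-1)|\log\delta_2|$, leaving $g_i(t)<0$; the value $\delta_2=\frac1{2(\exp(\beta2^\alpha)+1)}$ is precisely where $|\log\delta_2|\ge\beta2^\alpha+\log2$ makes this hold. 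On the lower face $y_i=\delta_1$ I would bound $g_i(t)$ from below: the blow-up $\beta y_i^{-\alpha}-\log y_i$ dominates the edge terms, controlled through the neighbor upper bound $y_j\le1-\delta_2$ via $\log(1-\delta_1-y_j)\ge\log(\delta_2-\delta_1)$, and $\delta_1=\frac{\beta2^\alpha\delta_2^d}{4(1+\beta2^\alpha d\,\delta_2^{d-1})}$ is the threshold making the net drift nonnegative. Finally, on an edge face $y_i+y_j=1-\delta_3$ the term $\log\delta_3$ appears in both $g_i$ and $g_j$; summing, the doubled $\log\delta_3$ dominates the combined utility push (at most $\beta\delta_1^{-\alpha}$, since at most one endpoint can be as small as $\delta_1$), so $y_i+y_j$ is driven down, and $\delta_3=\frac{\delta_1}{2\exp(\beta\delta_1^{-\alpha})}$ is chosen so that $-\log\delta_3=\beta\delta_1^{-\alpha}+\log2+|\log\delta_1|$ cancels that push with margin.

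The main obstacle I anticipate is twofold. First, the definition of $t_*$ is self-referential — it is a property of the very trajectory $\by(\cdot)$ being confined — so one must argue that the induction and the definition of $t_*$ are mutually consistent: as long as the trajectory stays in the interior of $D_B$ the gradient $g_i(t)$ is finite, and \eqref{eq:c1c2} guarantees $t_*<\infty$, which closes the loop. Second, and more delicate, is the edge face for $t>t_*$: unlike the coordinate faces it couples the simultaneous updates of both endpoints, so the ``gradient points inward'' picture must be applied to $g_i(t)+g_j(t)$ rather than to a single coordinate, and one has to track which endpoint may be small so as not to lose the dominant $\log\delta_3$ term. Absorbing the extra ``one small step'' slack — the reason for the factors $\tfrac14,\tfrac12$ in the $\delta$'s — is where the bookkeeping is heaviest.
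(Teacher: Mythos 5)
Your proposal takes essentially the same route as the paper's proof: a crossover at $t_*$, with the dynamic projection bounds ($c_1(t)$, $c_2(t)$, and the cap $1-\kappa(t)$) enforcing the $\delta$-margins for early times, and the small-step property of Lemma~\ref{LEM:TSTAR} combined with inward-pointing gradients near the three face types (the paper's Lemmas~\ref{lem:delta2}, \ref{lem:delta3}, and \ref{lem:b3}, with the same threshold constants) for later times. The only presentational differences are that you organize the argument as an induction on $t$ and treat the edge faces via the sum $\frac{\partial K_B}{\partial y_i}+\frac{\partial K_B}{\partial y_j}$, whereas the paper splits into the three claims \eqref{eq1}--\eqref{eq3} at threshold times $c_1^{-1}(\delta_1)$, $c_2^{-1}(\delta_2)$, $c_2^{-1}(\delta_3)$ and shows each partial derivative is separately negative on an edge face; both variants go through.
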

\begin{proof}
The proof  is presented in Appendix.
\end{proof}

\renewcommand{\y}{\by}

\smallskip
\noindent{\fbs Completing the proof of Theorem \ref{thm:bethe-con}.}
Now we proceed toward completing the proof of Theorem \ref{thm:bethe-con}.

First, from $\delta_1,$ $\delta_2,$ and $\delta_3$ in
Lemma~\ref{lem:betheupdate}, we define $\delta$ and $t_{\delta}$ as following: 
\begin{eqnarray*}
 t_{\delta} &:=&
\max\left\{c_1^{-1}(\delta_1),c_2^{-1}(\min
\{ \delta_2, \delta_3 \})\right\}.
\end{eqnarray*} Then, 
Lemma~\ref{lem:betheupdate} implies that for every time $t \ge t_{\delta},$
$$y_i(t+1) = y_v(t)+  \frac{1}{\sqrt{t}}\frac{\partial K_B (\y (t))}{\partial
y_i} .$$
Namely, the projection $[\cdot]_*$ is not necessary after time $t_\delta$. 
Thus, it follows that for $t>t_{\delta}$,
\begin{align*}
\| \by (t+1) - \by^* \|_{2}^{2} =& \| \by (t) + \frac{1}{\sqrt{t}}\nabla
K_B(\by(t))- \by^* \|_{2}^{2} \cr
 =& \| \by (t) - \by^{*}\|_{2}^{2} + \frac{1}{t}\| \nabla K_B(\by(t)) \|_{2}^{2}+ 2 \frac{1}{\sqrt{t}}\nabla K_B(\by(t))^{T}\cdot (\by (t) - \by^{*})
\cr
\stackrel{(a)}{\leq}& \| \by (t) - \by^{*}\|_{2}^{2}+ \frac{1}{t}\| \nabla K_B(\by(t)) \|_{2}^{2}+ 2 \frac{1}{\sqrt{t}} (K_B (\by (t)) - K_B (\by^{*})),
\end{align*}
where $(a)$ comes from the concavity of $K_B (\y)$ in Lemma \ref{LEM:CONCAVE}. 
By rearranging terms in the above inequality, we have
\begin{equation}
\frac{1}{\sqrt{t}} (K_B (\by^{*}) - K_B (\by (t)))\leq \frac{1}{2} \Big(\| \by (t) -
\by^{*}\|_{2}^{2} - \| \by (t+1) - \by^* \|_{2}^{2}  + \frac{1}{t}\| \nabla K_B(\by(t)) \|_{2}^{2} \Big). \label{eq:sumdiffer}
\end{equation}

We are now ready to complete this proof. We divide
$\sum_{t=1}^T \mu(t) ( K_B (\y^*) - K_B (\y(t)))$ into two parts: 
\sqeq
\begin{equation*}
\sum_{t=1}^T \mu(t) ( K_B (\y^*) - K_B (\y(t)))= 
\sum_{t=1}^{t_{\delta}-1} \mu(t) ( K_B (\y^*) - K_B (\y(t)))+
\sum_{t=t_{\delta}}^T \mu(t) ( K_B (\y^*) - K_B (\y(t))),
\end{equation*}
\unsqeq

\noindent where the first part can be bounded by some constant. We also obtain the upper bound
of the second part by \eqref{eq:sumdiffer}.
\sqeq
\begin{align*}
\sum_{t=t_{\delta}}^T \mu(t) ( K_B (\y^*) - K_B (\y(t)))~\leq~& \frac1{2\sum_{t=1}^{T}\frac{1}{\sqrt{t}}} \Big( \| \by (0) -
\by^{*}\|_{2}^{2} - \| \by (T) - \by^* \|_{2}^{2}  +  \sum_{t=t_{\delta}}^{T}\frac{1}{t}\,\|\nabla K_B
  (\y(t))\|_2^2 \Big) \cr
~\leq~&\frac1{\sum_{t=t_{\delta}}^{T}\frac{1}{\sqrt{t}}}\left(O(n)+ O\left(n\right)\,\sum_{t=t_{\delta}}^{T}\frac{1}{t}\right)
\end{align*}
\unsqeq
Finally, we can conclude that
\begin{equation*}
\sum_{t=0}^T \mu(t) ( K_B (\y^*) - K_B (\y(t)))=O\left(\frac{n \log T}{\sqrt{T}}\right).
\end{equation*}

\subsection{Proof of Theorem~\ref{thm:bethe-error}}\label{sec:proof-theor-refthm:b}
  There are two reasons for the error: the additional term of entropy
  in $K_B(\by)$ and the Bethe error because of intensity updating by
  \eqref{eq:biu}.  Thus, we devide the utility gap between the optimal
  value and the achieved value to represent the error due to each
  reason.
\begin{align*}
\max_{\bs \in C(G)}\sum_{i \in \slink} U_i (s_i) - \sum_{j \in
  \slink} U_i (s_i(\bor^*)) =&\left( \max_{\bs \in C(G)}\sum_{i \in \slink} U_i (s_i) - \sum_{j
    \in \slink}U_i (y_i^*) \right) + \cr &\left(\sum_{j
    \in \slink}U_i (y_i^*) -\sum_{j \in
  \slink} U_i (s_i(\bor^*)) \right)\cr
 \stackrel{(b)}{\leq}& \frac{H_B(\by^*)}{\beta} +\left(\sum_{j
    \in \slink}U_i (y_i^*) -\sum_{j \in
  \slink} U_i (s_i(\bor^*)) \right) \cr
 \stackrel{(c)}{\leq}& \frac{n\log 2}{\beta} + \sum_{i \in \slink} U_i
\big(s_i(\bor^*) +
e_B(\bor^*)\big) - \sum_{j \in
  \slink} U_i (s_i(\bor^*)) \cr
 \stackrel{(d)}{\leq}& e_B(\bor^*)\sum_{i \in
  \slink} s_i(\bor^*)^{-\alpha}+\frac{n\log 2}{\beta},
\end{align*}
where for $(b)$ we use $\beta \sum_{i \in \slink} U_i (s_i^*)
\leq K_B (\y^*)$, for $(c)$ we use the definition of Bethe error $e_B
(\bor^*)$ and $H_B (\by)
\leq n \log 2$, and $(d)$ holds since $U_i (\cdot )$ is an $\alpha$
fairness function and concave. This is the end of this proof.




\section{Simulation Results}
\label{sec:simulation}

In this section, we provide simulation results to verify how our
proposed algorithms perform under various scenarios. First, we compute
the Bethe error $e_B(\bor)$ (i.e., the difference between the target service rate and the
actual service rate) for various interference graphs and target service
rates. The tested interference graphs are shown in Fig.~\ref{fig:tps}.
Second, {\bf BUM} are compared with the three conventional
algorithms introduced in Section~\ref{sec:comp-with-prior} regarding to
convergence speed and achieved network utility, where we choose
$\alpha=1$, $\beta = 1$, $c_1(t) = \frac{1}{100 \log (t+e)}$, and $c_2(t) =\frac{1}{5t^{1/4}}$ just for simplicity.  We observed that other
values of $\alpha$ and $\beta$ show similar results.

\subsection{Stability}
As we stated in Section~\ref{sec:stability}, the stability algorithm
{\bf BAS} does not lead to the exact target service rate for the
topologies that are not tree. 
Fig.~\ref{fig:bethe_error} represents the Bethe
error for {\em complete}, {\em ring}, and {\em
  random} topologies. In the graphs, we define ``Load'' as the fraction of the
traffic rate over the capacity of the network and the $y$-axis
represents the normalized Bethe error by the target service rate. In
this experiment, we assume symmetric arrivals where the target service
rates of all links are equal.

\begin{figure*}[t!]
  \begin{center}
    \subfigure[\small Complete]{
      \includegraphics*[width = 0.2\columnwidth]{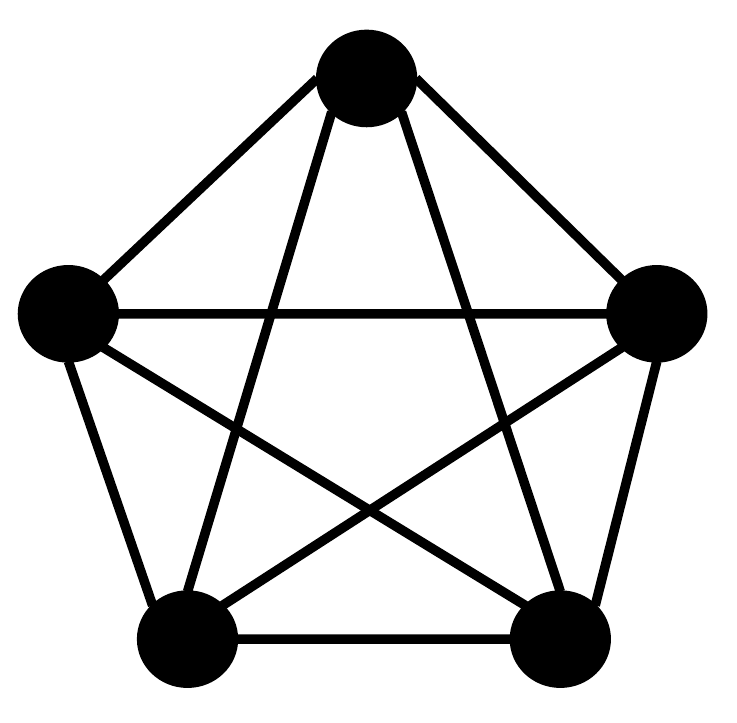}\label{fig:comp_tp} }
    \subfigure[\small Ring]{
      \includegraphics*[width = 0.2\columnwidth]{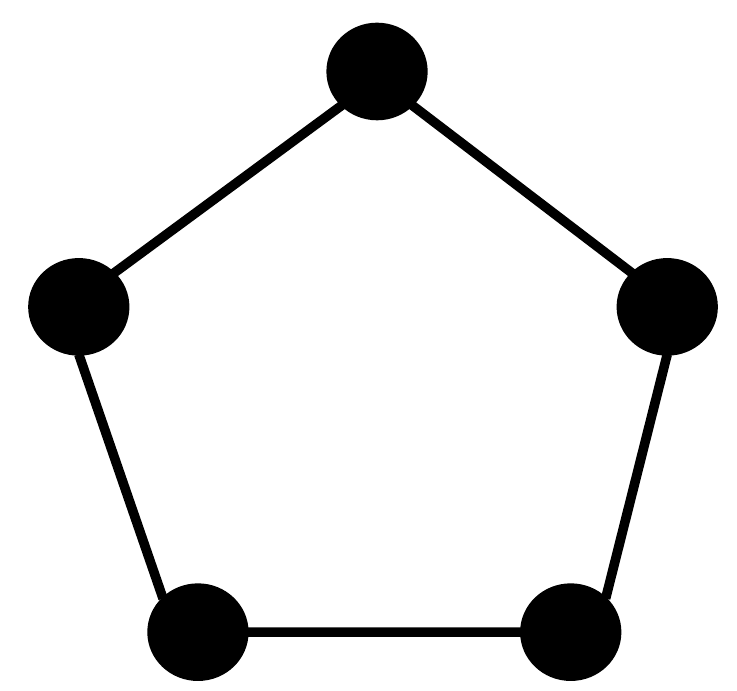}\label{fig:ring_tp}}
    \subfigure[\small Tree]{
      \includegraphics*[width =
      0.2\columnwidth]{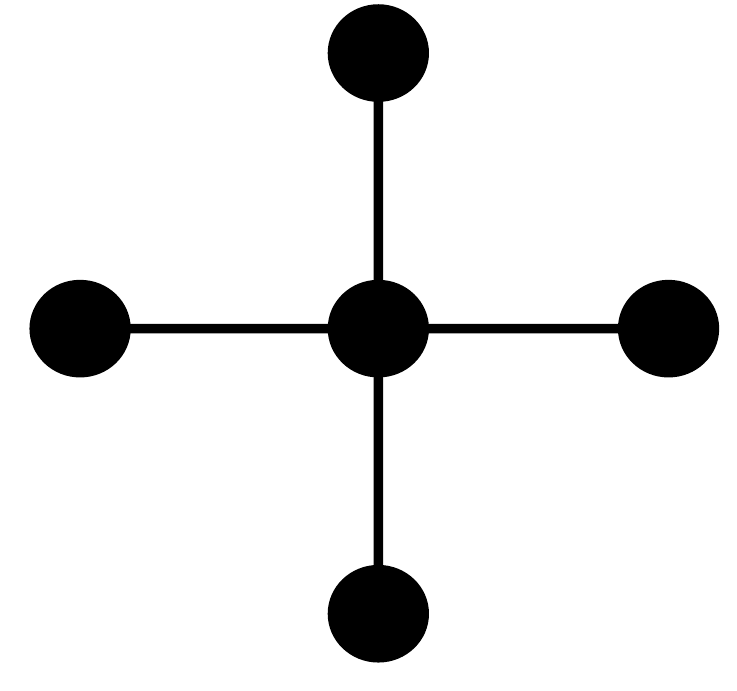}\label{fig:Tree_tp} }
    \subfigure[\small Grid]{
      \includegraphics*[width = 0.19\columnwidth]{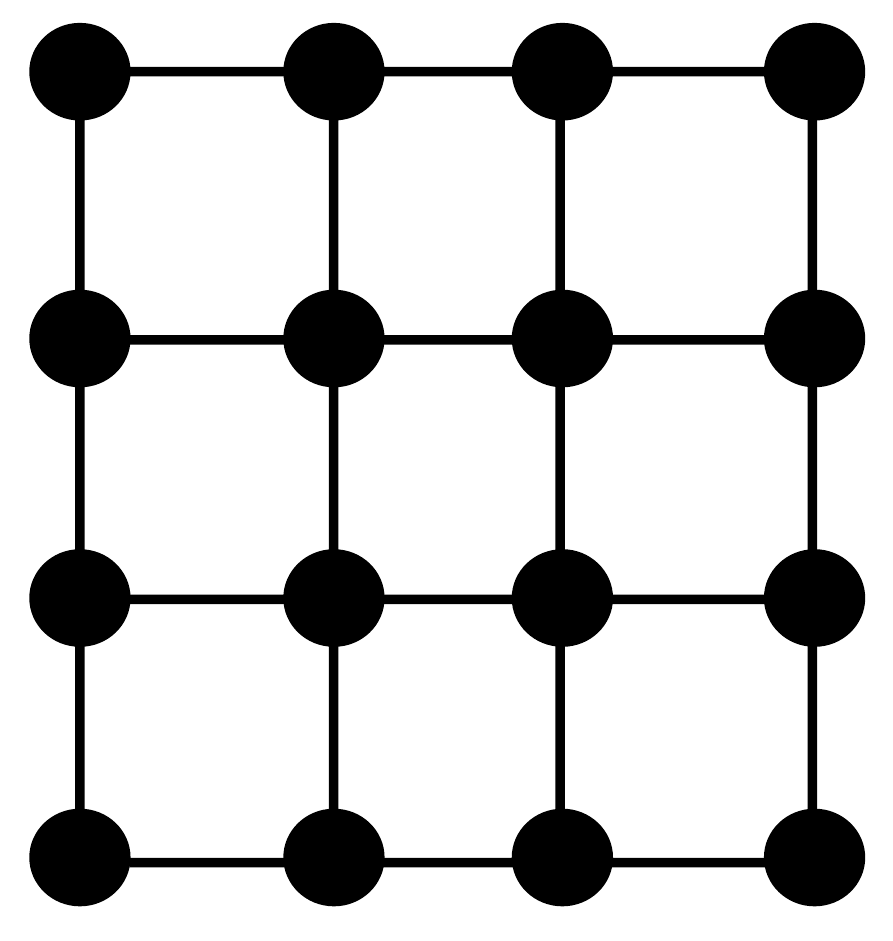}\label{fig:Tree_tp} }
  \vspace{-0.2cm}
    \caption{Interference graph topologies}
    \label{fig:tps}
  \end{center}
  \vspace{-0.2cm}
\end{figure*}

\smallskip
\noindent{\fbs Varying traffic loads.} The graphs in
Fig.~\ref{fig:bethe_error} show the normalized Bethe error on three
topologies: {\em complete,} {\em ring}, and {\em grid.} The normalized
Bethe errors grow up to at most 0.2, which means that the Bethe error is
within 20\% of the corresponding target service rate. In addition, for
all three topologies, the Bethe error increases as the traffic load
increases. Although {\bf BAS} experiences more errors with higher
transmission intensity, it is noteworthy that the mixing time also
increases with higher transmission intensity. Thus, the MCMC based
algorithms need far more convergence time with the higher transmission
intensity although they can get the accurate
service rate estimation.

\smallskip
\noindent{\fbs Impact of topology. } Bethe error should strongly
depend on the underlying topology. As stated in Section~\ref{sec:stability},
tree topologies do not have error, while other types of topologies have
positive Bethe error. Trees are the ones that are connected and have no cycle. In
general, cycles are the major reasons for large Bethe errors, where
errors tend to grow with the increasing number of cycles in the
topology. 
In this context, we observe that for complete graphs, the error
becomes more significant as the number of links increases, mainly
because the number of cycles also increases with the number of links.
For ring graphs, we also see the effect of the size of cycle. In
Fig.~\ref{fig:berr_ring}, the error of 12-links is smaller than that of
others, because the cycle becomes similar with a line topology as the
number of links increases.

\begin{figure*}[t!]
  \begin{center}
    \subfigure[\small Complete]{
      \includegraphics*[width = 0.31\columnwidth]{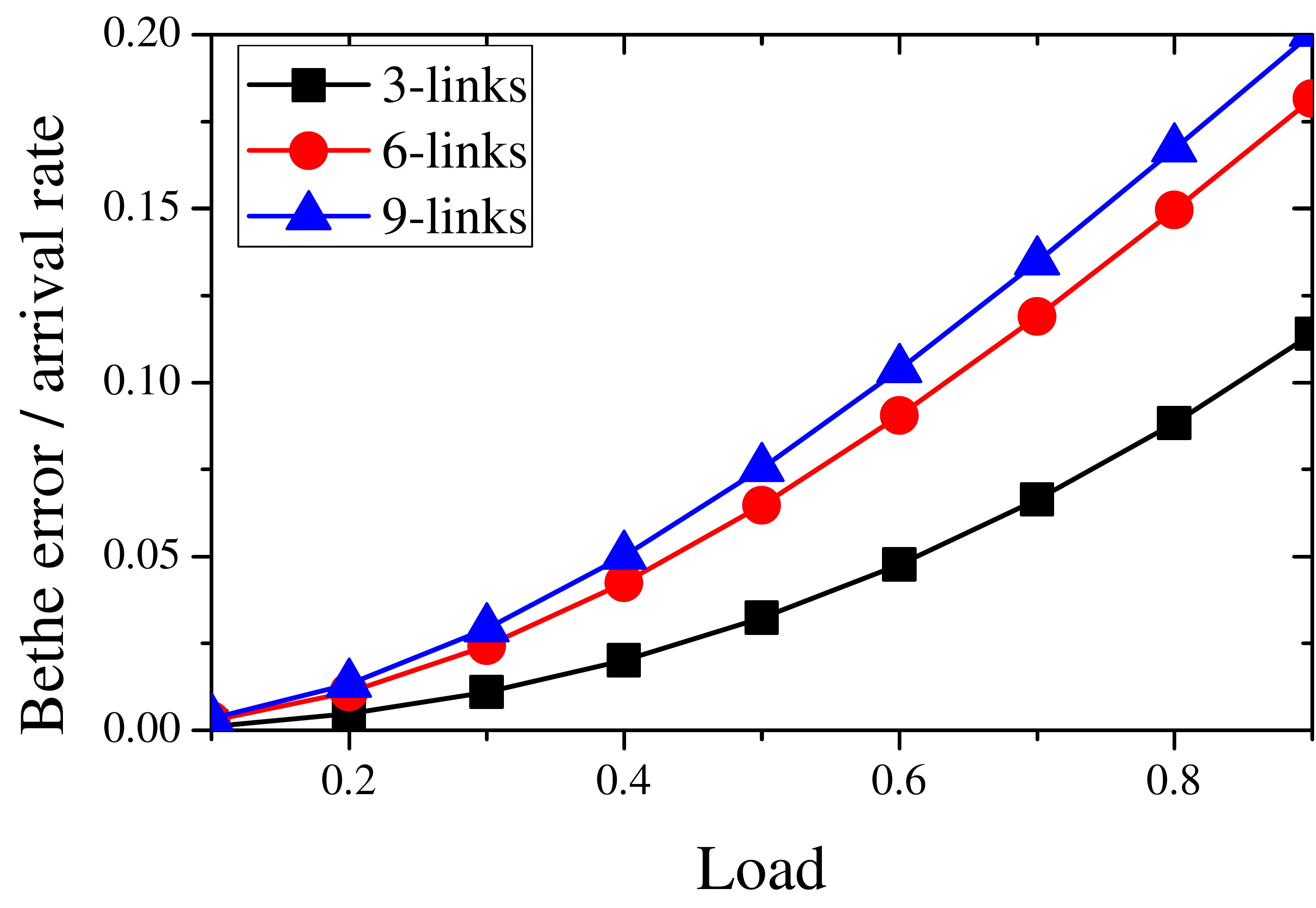}\label{fig:berr_comp} }
    \subfigure[\small Ring]{
      \includegraphics*[width = 0.31\columnwidth]{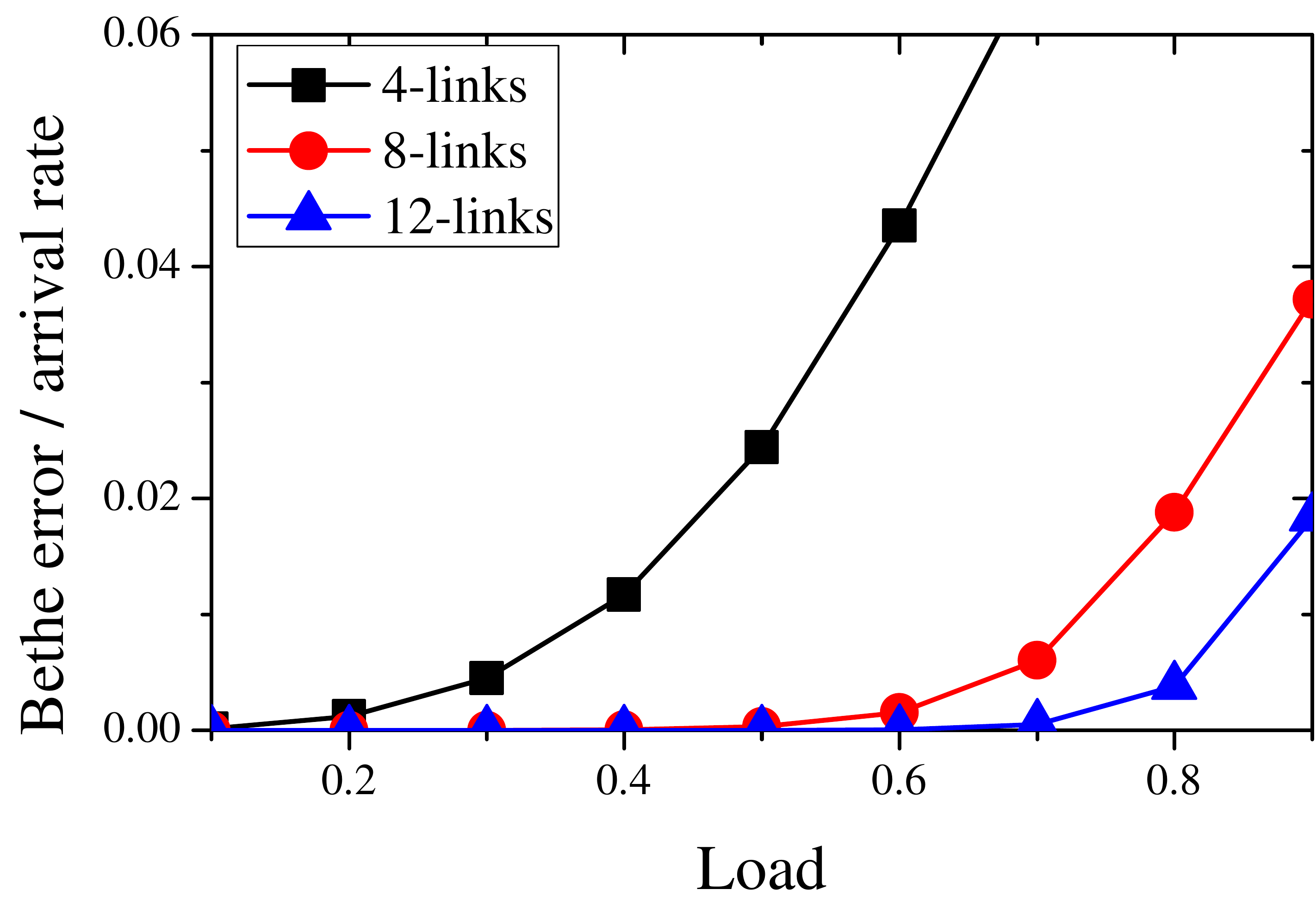}\label{fig:berr_ring}}
    \subfigure[\small Random]{
      \includegraphics*[width = 0.31\columnwidth]{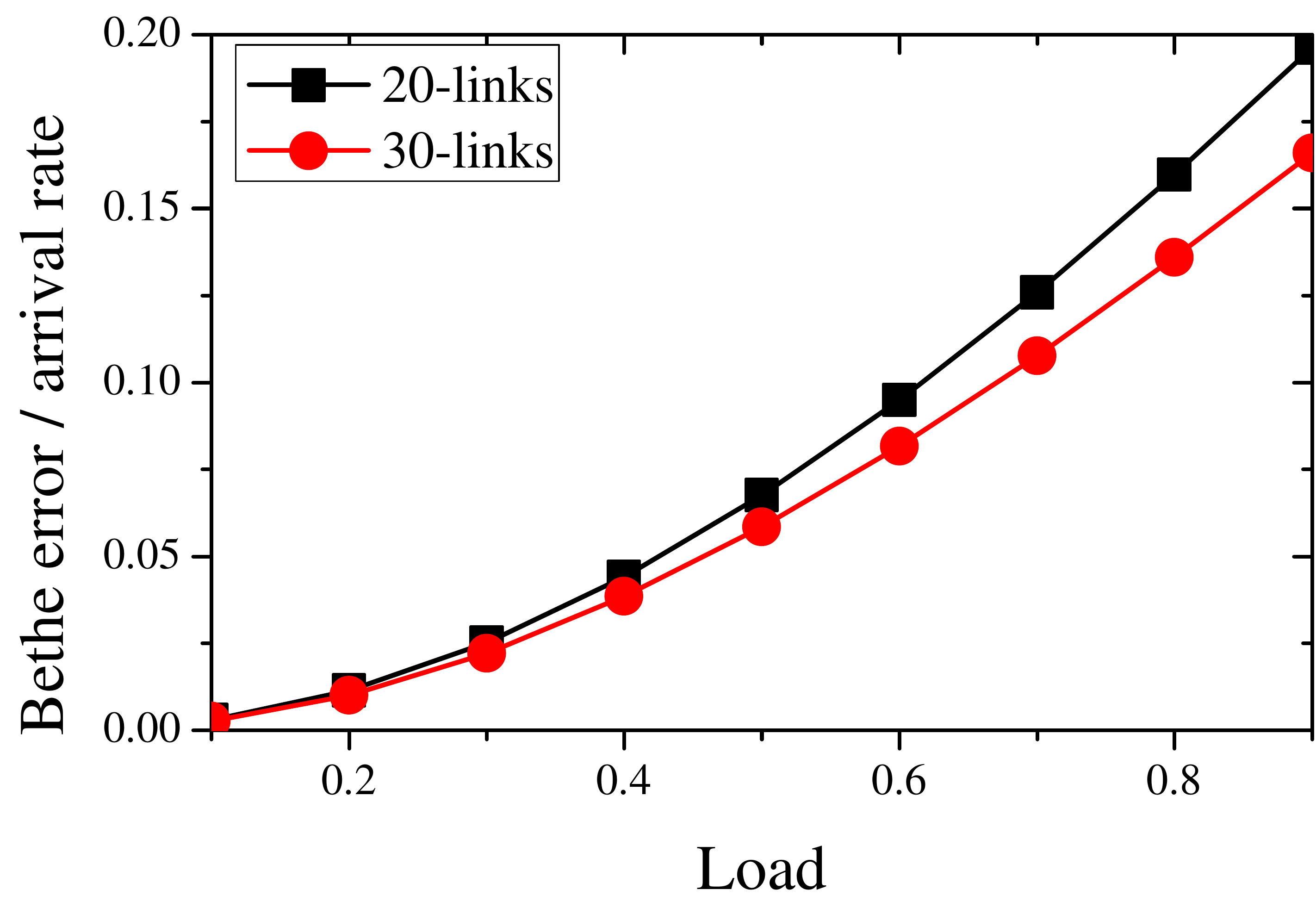}\label{fig:berr_rand} }
  \vspace{-0.2cm}
    \caption{Bethe error for various graphs (where `Load' means arrival
    rate / capacity when arrival rates are the same) }
    \label{fig:bethe_error}
  \end{center}
  \vspace{-0.2cm}
\end{figure*}

\begin{figure*}[t!]
  \begin{center}
    \subfigure[\small Tree (Star)]{
      \includegraphics*[width = 0.31\columnwidth]{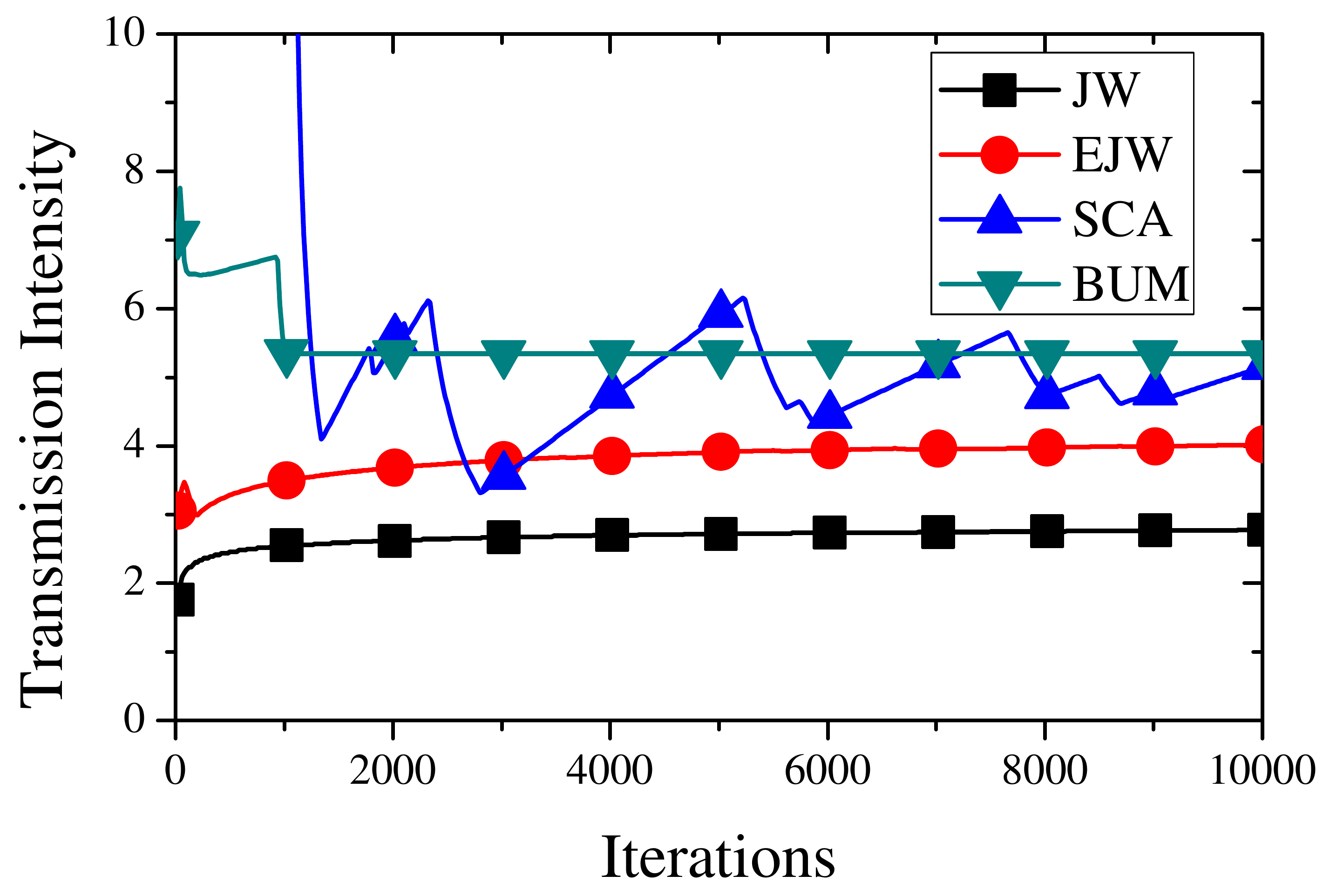}\label{fig:tr_tree} }
    \subfigure[\small Complete]{
      \includegraphics*[width = 0.31\columnwidth]{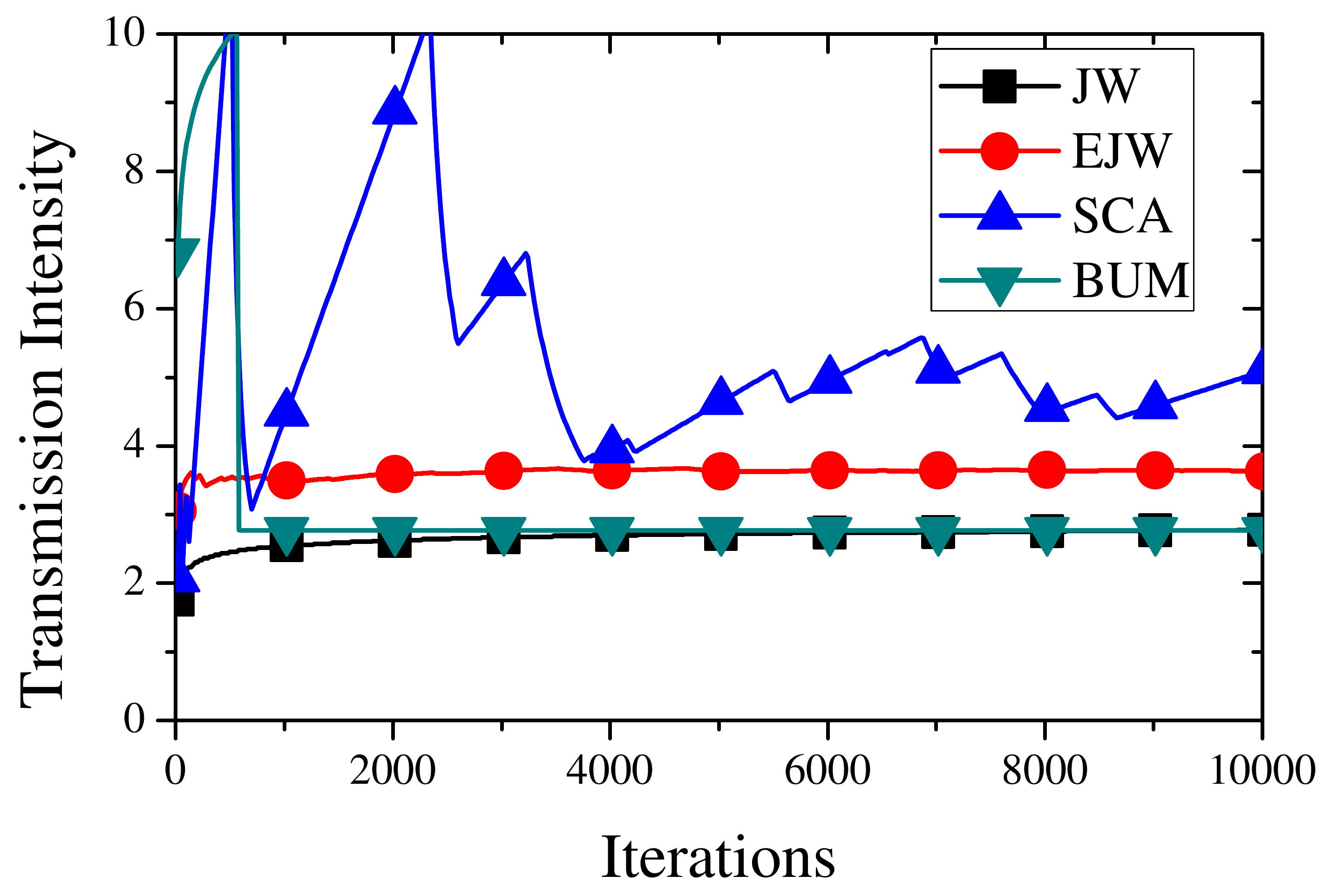}\label{fig:tr_comp}}
    \subfigure[\small Grid]{
      \includegraphics*[width = 0.31\columnwidth]{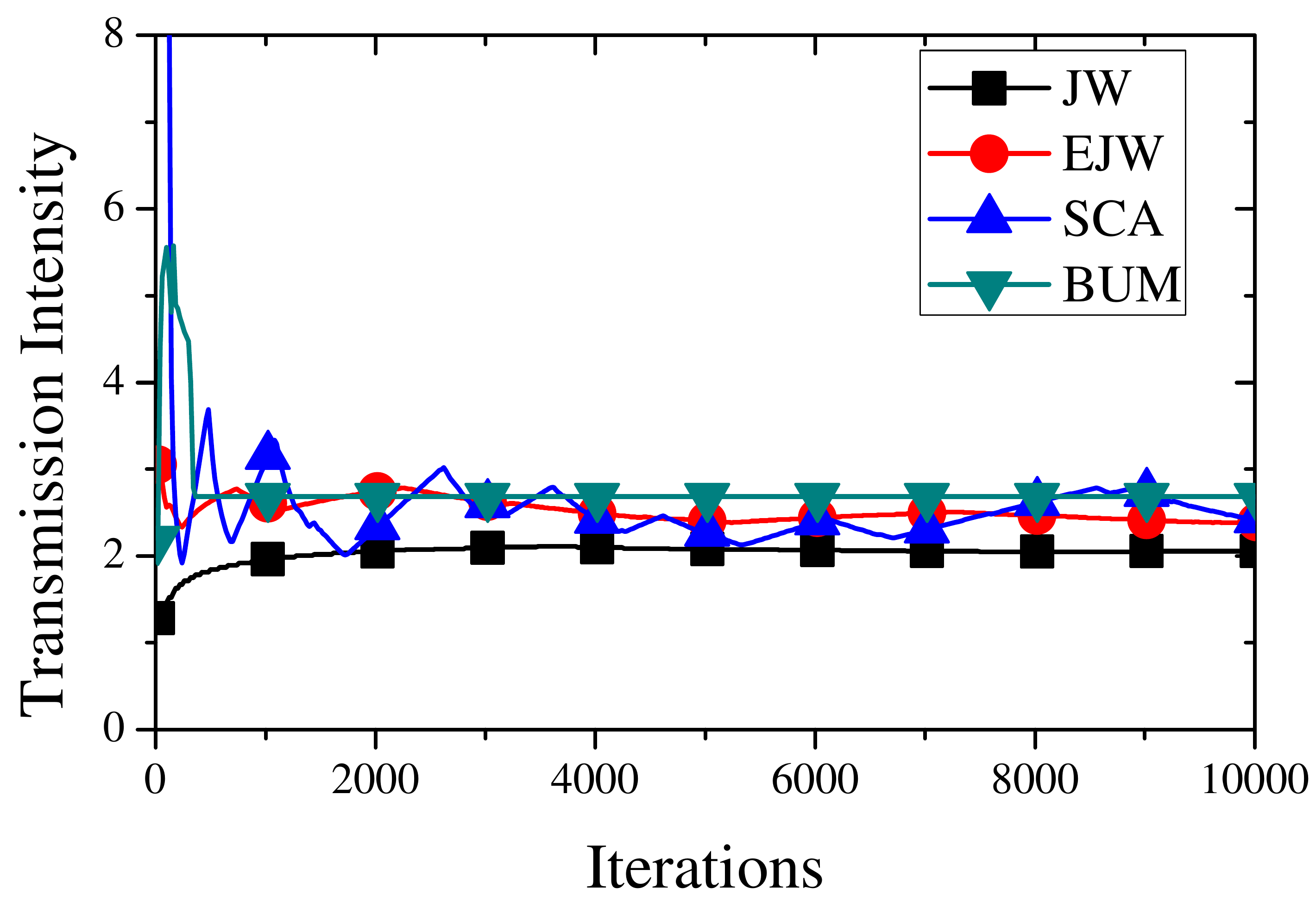}\label{fig:tr_grid} }
  \vspace{-0.2cm}
    \caption{Trace of transmission intensity}
    \label{fig:tr_inten}
  \end{center}
  \vspace{-0.2cm}
\end{figure*}

\subsection{Utility Maximization}
\noindent{\fbs Convergence speed.} Fig~\ref{fig:tr_inten} shows the
transmission intensity where the graph structure is tree. Note that in
tree graphs, all of the algorithms have to converge to the same point,
because $e_B(\by) = 0$ for all $\by$ when the graph is tree. In the
results, {\bf BUM} becomes stable within only 1000 iterations, whereas
the other algorithms does not converge until 10000
iterations. Although the lines of JW and EJW seems to be converged,
they grow up very slowly. For the other interference graphs, the trace
patterns look similar with the trace of tree graph. All of the
algorithms do not converge until 10000 iterations except {\bf BUM}
which converges within 1000 iterations for all graphs. 
In this
  simulation, we assume that each update of {\bf BUM} spends a time
  slot for one packet transmission.
Indeed, since each node $i$ broadcasts just $y_i(t)$ at each update,
{\bf BUM} does not need the entire time slot. Thus, we can
use {\bf BUM} as an offline algorithm to find the initial transmission
intensities so that the network utility becomes very close to the
maximum network utility at the beginning. 

\smallskip
\noindent{\fbs Network utility.} As we stated in
Theorem~\ref{errorofBUM}, {\bf BUM} generates error due to the Bethe
approximation on intensity update. However, the error is not
significant in our test scenarios.  By numerical analysis, we
get the network utility when {\bf BUM} is used:-19.9 (for a $5\times5$
grid interference graph) and -8.1 (for a complete interference graph
links). The utility is close to that from the conventional
algorithms based on MCMC: -20.6 (for a $5\times5$ grid interference
graph) and -8.05 (for a complete interference graph with 5 links). For
the star graph with 5 links, all of the algorithms converge to
-3.3. We found that all of the algorithms achieve similar utilities,
while {\bf BUM} converges much faster than prior algorithms.







\section{Conclusions}

Recently, throughput and utility optimal CSMA algorithms are
proposed. The simple and distributed MAC protocol can achieve the both
throughput and utility optimal with just locally controlling of
parameters. In the previous algorithms, links iteratively update their
parameters by their own empirical service and arrival rates. However,
their convergence speed is often slow because of the stochastic
behavior of scheduling. In this paper, we firstly connect Bethe Free
Energy (BFE) with CSMA so as to dramatically reduce the convergence
speed. The motivation of this work is that the estimation on the
service can be replaced by finding maximum point of the Bethe free
energy function since the maximum point gives a good estimation on the
service rate. From this motivation, we propose an algorithm by which
the CSMA parameters can be nearly optimal without the investigation on
service rate when links know the arrival rate of neighbor links by
message exchange. In view of network utility, we
propose an utility-maximizing algorithm {\bf BUM} based on the
intensity update algorithm using BFE. Since the algorithm does not use
empirical values, {\bf BUM} provably converges in polynomial time, where
such a guarantee cannot be achievable via prior known schemes.


\bibliographystyle{IEEEtran}
\bibliography{reference}

\newpage
\appendix
\subsection{Notations} \label{apd:nota}
Table~\ref{tab:notation} contains notations used in this paper.
\begin{table}
  \centering
  \caption{Notations}
    \begin{tabular}{|c|p{12cm}|}
\multicolumn{2}{c}{Network model}\\
\hline \hline
$V$& the set of vertices (nodes)\\
\hline
$n$& the number of vertices (nodes)\\
\hline
$E$& the set of edges such that $(i,j)\in E$ if their
transmissions interfere with each other \\
\hline
$G=(V,E)$& the interference graph \\
\hline
$\mathcal{N}(i)$& $\{j:(i,j)\in E \}$, the set of the neighboring
links of link $i$\\
\hline
$\bsigma (t) $& $[\sigma_i (t)\in \{ 0,1\}:i\in V]$, the scheduling
vector at time $t$\\
\hline 
$\mathcal{I}(G)$& $\{\bsigma \in \{0,1\}^{n}:\sigma_i+\sigma_j \le 1,
\forall (i,j) \in E \}$, the set of all feasible schedule vectors \\
\hline
$C(G)$& $\left\{ \sum_{\bsigma \in
    \sI(G)}\alpha_{\bsigma}\bsigma :\sum_{\bsigma \in
    \sI(G)}\alpha_{\bsigma}=1,~ \alpha_{\bsigma}\ge 0,~ \forall \bsigma
  \in \sI(G) \right\}$, the set of all possible service rate vectors \\
\hline
$\br$& $[r_i: i\in V]$, the transmission intensity vector\\
\hline
$s_i (\br)$& the service rate of link $i$ under CSMA with transmission
intensity vector $\br$ \\ \hline
$\lambda_i$& the packet arrival rate at link $i$\\ \hline
\multicolumn{2}{c}{Free energies}\\
\hline \hline
$F_G(\cdot;\br)$, $H_G(\cdot)$& Gibbs free energy function with intensity
vector $\br$ and Gibbs entropy (they are functions of probability
distributions on space $\mathcal{I}(G)$)  \\ \hline
$F_B (\cdot;\br)$, $H_B(~\cdot~)$& Bethe
free energy function with intensity vector $\br$ and Bethe entropy  \\
\hline
$D_B$& $\{\by: y_i\geq 0, y_i+y_j\leq 1,~\mbox{for all}~(i,j)\in E\}$, the domain of $F_B$ and $H_B$ \\\hline
$e_B(\br)$& Bethe error (refer to Definition~\ref{def:bethe}) \\
\hline
\multicolumn{2}{c}{Utilities}\\
\hline \hline
$U_i(\cdot)$& the utility function of link $i$ \\ \hline
$K_B(\by)$, $\by \in D_B$& $\beta \cdot \sum_{i \in \slink} U_i
(y_i) +H_B (\by)$, the objective function of BUM \\\hline
    \end{tabular}
  \label{tab:notation}
\end{table}

\subsection{Proof of Lemma~\ref{LEM:CONCAVE}}\label{sec:pfLEM:CONCAVE}
Let $\mathcal{H}(\by)$ denote the Hessian matrix of $K_B
  (\by)$ and $\mathcal{H}(\by)_{ij}$ denote the element of
  $\mathcal{H}(\by)$ on $i$-th row and $j$-th column. When the Hessian
  matrix $\mathcal{H}(\by)$ is negative definite ($\ie~~ \bx
  \cdot \mathcal{H}(\by) \cdot \bx \leq 0$ for all $\bx$) for all
  feasible $\by,$ $K_B(\by)$ is concave. Therefore, we will show
  the concaveness of $K_B(\by)$ by showing that $\bx \cdot
  \mathcal{H}(\by) \cdot \bx \leq 0$  for all $\bx .$

The diagonal elements $\mathcal{H}(\by)_{ii}$
  are computed as follows:
\begin{align*}
\mathcal{H}(\by)_{ii} = &
\beta\cdot U_i''(y_i)+(d(i)-1) \frac{1}{1-y_i}-\frac{1}{y_i}-\sum_{j \in
\n(i)}\frac{1}{1-y_i-y_j} \cr
=&-\alpha \beta \cdot y_i^{-\alpha-1}- \frac{1}{y_i}
-\frac{1}{1-y_i} -\sum_{j \in
\n(i)} \left( \frac{1}{1-y_i-y_j} - \frac{1}{1-y_i}\right),
\end{align*}
which is bounded above as follows :
\begin{align*}
\mathcal{H}(\by)_{ii}&  <~-\sum_{j \in
\n(i)} \left( \frac{1}{1-y_i-y_j} - \frac{1}{1-y_i} \right) \cr
&~=~-\sum_{j \in \n(i)} \left(\frac{y_j}{1-y_i}\cdot \frac{1}{1-y_i-y_j}
\right),
\end{align*}
since $-\alpha \beta \cdot y_i^{-\alpha-1}- \frac{1}{y_i}
-\frac{1}{1-y_i}<0.$ Moreover, when $y_i < 1/2,$
we can get more tight bound as follows:
\begin{align*}
\mathcal{H}(\by)_{ii} <& - 2d \cdot y_i^{-\alpha-1}+(d(i)-1) \frac{1}{1-y_i}-\frac{1}{y_i}-\sum_{j \in
\n(i)}\frac{1}{1-y_i-y_j} \cr
\stackrel{(a)}{<} &- d \cdot y_i^{-\alpha-1}-\sum_{j \in
\n(i)}  \frac{1}{1-y_i-y_j} \cr
\stackrel{(b)}{<}&-\sum_{j \in \n(i)} \left(\frac{1}{y_i} + \frac{1}{1-y_i-y_j}
\right) \cr
=&-\sum_{j \in \n(i)}\left(\frac{1-y_j}{y_i}\cdot \frac{1}{1-y_i-y_j}
\right),
\end{align*}
where for $(a)$ we use that $y_i^{-\alpha-1} > \frac{1}{1-y_i}$ when $y_i <
1/2$ and $(b)$ follows from $y_i^{-\alpha-1} > 1/y_i.$

One can easily compute the non-diagonal elements such that
\begin{equation*}
\mathcal{H}(\by)_{ij} = \mathcal{H}(\by)_{ji}=\begin{cases}
-\frac{1}{1-y_i-y_j}<0 & \mbox{if}~~(i,j)\in E \cr
0 & \mbox{otherwise}.\end{cases}
\end{equation*}

Without loss of generality, let $y_u \le y_v$ when the edge is denoted by
$(u,v).$ Then,
\begin{align*}
&\bx^T  \mathcal{H}(\by)\bx = \sum_{i \in \slink} x_i^2
\mathcal{H}(\by)_{ii}+ \sum_{(i,j) \in E} 2x_i x_j
\mathcal{H}(\by)_{ij}\cr
&< -\sum_{i \in \slink:y_i < \frac{1}{2}} \sum_{j \in \N(i)}
\frac{1-y_j}{y_i} \frac{x_i^2}{1-y_i-y_j} - \sum_{i \in \slink : y_i \geq
  \frac{1}{2}} \sum_{j \in \N(i)} \frac{y_j}{1-y_i}
\frac{x_i^2}{1-y_i-y_j}   -\sum_{(i,j) \in E}\frac{2x_i x_j}{1-y_i-y_j} \cr
&< -\sum_{i \in \slink} \sum_{j \in \N(i): y_i \le y_j}
\frac{1-y_j}{y_i} \frac{x_i^2}{1-y_i-y_j} - \sum_{j \in \slink }
\sum_{i \in \N(j) : y_i < y_j} \frac{y_i}{1-y_j}
\frac{x_j^2}{1-y_i-y_j}  -\sum_{(i,j) \in E}\frac{2x_i x_j}{1-y_i-y_j}\cr
&= -\sum_{(i,j) \in E} \Big( \frac{1-y_j}{y_i}\cdot \frac{1}{1-y_i-y_j}
 x_i^2 + \frac{2x_i x_j}{1-y_i-y_j} +  \frac{y_i}{1-y_j}\cdot
\frac{1}{1-y_i-y_j} x_j^2 \Big) \cr
&=  -\sum_{(i,j) \in E} \frac{1}{1-y_i-y_j}\Big( \sqrt{\frac{1-y_j}{y_i}}
 x_i + \sqrt{\frac{y_i}{1-y_j}}x_j \Big)^2 ~ \le ~ 0.
\end{align*}
Therefore, $\mathcal{H}$ is negative definite matrix.



\subsection{Proof of Lemma~\ref{lem:betheupdate}}
Recall that
\begin{align*}
  \delta_2&:=\min\left \{ c_2(t_*), \frac{1}{2
      (\exp(\beta2^{\alpha})+1)}\right \}, \quad\quad \delta_1:=\min
  \left\{c_1(t_*)
    ,\frac{\beta2^{\alpha}\delta_2^d}{4(1+\beta2^{\alpha}d\delta_2^{d-1})}
  \right\}, \cr \delta_3&:=\min\left \{ c_2(t_*),
    \frac{\delta_1}{2 \exp(\beta \delta_1^{-\alpha})}
  \right\},~\mbox{and}~ \cr
t_* & := \inf \left\{ \tau :  \frac{1}{\sqrt{t}}\left|\frac{\partial K_B (\by(t))}{\partial
    y_i}\right| < \frac{1}{2}\min\{c_1(t),c_2(t)\} ~~\forall~ t \geq\tau\right\}.\end{align*} 
In this proof, for notational simplicity, we
introduce 
$\varepsilon_1 :=
\frac{\beta2^{\alpha}\delta_2^d}{4(1+\beta2^{\alpha}d\delta_2^{d-1})}$,
$\varepsilon_2:=\frac{1}{2 (\exp(\beta 2^{\alpha})+1)}$, and
$\varepsilon_3 :=\frac{\delta_1}{2\exp(\beta\delta_1^{-\alpha})}$.

We start by stating three key lemmas which play key roles in the proof
of Lemma \ref{lem:betheupdate}. First, by Lemma~\ref{LEM:TSTAR}, the gradient
of $K_B(\by(t))$ is bounded above with $\frac{1}{2}\min\{c_1(t),c_2(t)\}$ after
time $t^*.$ Next, we show that $\by(t+1)$ goes away from the boundary
of $D^*_B$ when $\by(t)$ is within $2\min\{c_1(t),c_2(t)\}$ away from the
boundary, by Lemma~\ref{lem:delta2}, Lemma~\ref{lem:delta3}, and
Lemma~\ref{lem:b3}. Then, the update of $\by(t)$ does not hit the
boundary of $D^*_B$ always.
\begin{lemma}\label{LEM:TSTAR}
There exists $t_*$ such that
, for every link $i$
$$\frac{1}{\sqrt{t}}\left|\frac{\partial K_B (\by(t))}{\partial
    y_i}\right| < \frac{1}{2}\min\{c_1(t),c_2(t)\} ,~~\forall~t \ge t_*.$$
\end{lemma}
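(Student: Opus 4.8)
The plan is to extract, purely from the definition of the projection $[\cdot]_*$, quantitative interior margins on $\by(t)$ that hold for every $t$, then substitute these margins into the gradient formula \eqref{eq:bethegradient} and invoke the hypothesis \eqref{eq:c1c2}. The point to emphasize is that no circularity with Lemma~\ref{lem:betheupdate} arises: the margins I need come directly from one application of $[\cdot]_*$, not from any statement about $t_*$.

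First I would record what a single update guarantees. Whether or not a coordinate is clamped, the definition of $[\cdot]_*$ forces $c_1(s)\le y_i(s+1)\le 1-\kappa_i(s)$, where $\kappa_i(s)=\tfrac{1-y_i(s)+\max_{j\in\n(i)}y_j(s)+c_2(s)}{2}$. From the upper bound alone, $1-y_i(s+1)\ge\kappa_i(s)\ge c_2(s)/2$, since $1-y_i(s)\ge0$ and $\max_{j}y_j(s)\ge0$. The crucial step is the edge margin: for an edge $(i,j)$ I would add the two clamping inequalities $y_i(s+1)\le1-\kappa_i(s)$ and $y_j(s+1)\le1-\kappa_j(s)$ and use $j\in\n(i)$ and $i\in\n(j)$, so that $\max_{k\in\n(i)}y_k(s)\ge y_j(s)$ and $\max_{k\in\n(j)}y_k(s)\ge y_i(s)$. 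This yields the unconditional bound $\kappa_i(s)+\kappa_j(s)\ge1+c_2(s)$, hence $1-y_i(s+1)-y_j(s+1)\ge c_2(s)$. Shifting $t=s+1$ and using that the admissible schedules $c_1,c_2$ are decreasing (so $c_\ell(t-1)\ge c_\ell(t)$), I obtain, for all large $t$ and all links and edges, the three margins $y_i(t)\ge c_1(t)$, $\;1-y_i(t)\ge c_2(t)/2$, and $\;1-y_i(t)-y_j(t)\ge c_2(t)$.

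Next I would bound each of the four terms of \eqref{eq:bethegradient} using these margins: $\beta U_i'(y_i(t))=\beta\, y_i(t)^{-\alpha}\le\beta\, c_1(t)^{-\alpha}$; $\;|(d(i)-1)\log(1-y_i(t))|\le(d-1)\log(2/c_2(t))$; $\;|\log y_i(t)|\le\log(1/c_1(t))$; and $\;\big|\sum_{j\in\n(i)}\log(1-y_i(t)-y_j(t))\big|\le d\log(1/c_2(t))$. Summing, the gradient at $\by(t)$ satisfies $\big|\partial K_B/\partial y_i\big|\le\beta\, c_1(t)^{-\alpha}-O(d)\log(c_1(t)c_2(t))+O(d)$, which is precisely the numerator occurring in \eqref{eq:c1c2}, up to the constants $\beta,d$ and an additive $O(d)$.

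Finally I would divide by $\sqrt t$ and compare against $\tfrac12\min\{c_1(t),c_2(t)\}$. Dividing the bound by $\sqrt t\,\min\{c_1(t),c_2(t)\}$, the $c_1(t)^{-\alpha}$ and $-\log(c_1(t)c_2(t))$ pieces vanish by \eqref{eq:c1c2}, and the additive $O(d)$ piece also vanishes because $c_1(t)\le1$ gives $c_1(t)^{-\alpha}\ge1$, whence $\tfrac{1}{\sqrt t\,\min\{c_1,c_2\}}\le\tfrac{c_1(t)^{-\alpha}}{\sqrt t\,\min\{c_1,c_2\}}\to0$. Thus the normalized gradient tends to $0$; since there are finitely many links $i$, the convergence is uniform in $i$, so the ratio drops below $1/2$ simultaneously for all links after some finite time, which is the desired $t_*$. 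The main obstacle is the first step, namely producing the edge margin $1-y_i(t)-y_j(t)\ge c_2(t)$ even though the projection on $y_i$ reads the neighbor value $y_j$ from the previous iterate rather than the current one; the resolution is the symmetric summation over $(i,j)$, which lets the $\max_{j}$ inside $\kappa$ absorb both endpoints of the edge at once.
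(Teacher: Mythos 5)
Your proposal is correct and follows essentially the same route as the paper's proof: extract the interior margins $y_i(t)\ge c_1(t)$ and $1-y_i(t)-y_j(t)\ge c_2(t)$ enforced by the projection $[\cdot]_*$, substitute them into \eqref{eq:bethegradient}, and let condition \eqref{eq:c1c2} drive the normalized gradient to zero, with uniformity over the finitely many links. The only differences are minor: you bound $\left|\partial K_B/\partial y_i\right|$ term by term (which requires the extra margin $1-y_i(t)\ge c_2(t)/2$), whereas the paper sandwiches the gradient between an upper bound $\big(c_1(t)\big)^{-\alpha}-\log c_1(t)$ and a lower bound $d(i)\log c_2(t)$; and you actually supply the symmetric-summation derivation of the edge margin $1-y_i(t)-y_j(t)\ge c_2(t)$, which the paper merely asserts as a property of the projection.
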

\begin{proof}
To conclude this proof, we will show that
$$ \lim_{t \rightarrow \infty} \frac{1}{\min\{c_1(t),c_2(t)\}
  \sqrt{t}}\frac{\partial K_B(\by(t))}{\partial y_i} = 0.$$
The proof starts from the range of first derivative function at time $t$:
\begin{align*}
\frac{\partial K_B (\by(t))}{\partial y_i} =& U'_{i}(y_i(t))-(d(i)-1)\log(1-y_i(t))-\log y_i(t) +\sum_{j\in \n(i)}\log(1-y_i(t)-y_j(t))\cr
\leq &\beta  y_i(t)^{-\alpha} -\log \frac{y_i(t)}{1-y_i(t)}+\sum_{j\in \n(i)}\log\frac{1-y_i(t)-y_j(t)}{1-y_i(t)} \cr
\leq &\beta  y_i(t)^{-\alpha} -\log \frac{y_i(t)}{1-y_i(t)} \cr
\leq &\big( c_1 (t) \big)^{-\alpha}-\log(c_1(t)),
\end{align*}
where the last inequality stems from the fact that $y_i(t) \ge c_1(t).$
Therefore, from \eqref{eq:c1c2},
$$\lim_{t \rightarrow \infty} \frac{1}{
  \sqrt{t}}\frac{1}{\min\{c_1(t),c_2(t)\}}\frac{\partial K_B(\by(t))}{\partial y_i} \le 0. $$
Now, the remaining part is $\lim_{t \rightarrow \infty} \frac{1}{\min\{c_1(t),c_2(t)\}
  \sqrt{t}}\frac{\partial K_B(\by(t))}{\partial y_i} \ge 0. $ Since
$1-y_i(t)-y_j(t) \ge c_2(t)$ for all $i\neq j$,
\begin{align*}
\frac{\partial K_B (\by(t))}{\partial y_i}&=U'_{i}(y_i(t))-(d(i)-1)\log(1-y_i(t))-\log y_i(t) +\sum_{j\in \n(i)}\log(1-y_i(t)-y_j(t))\cr
&\geq \sum_{j\in \n(i)}\log(1-y_i(t)-y_j(t)) \geq d(i)\log(c_2(t)).
\end{align*}
Therefore, from \eqref{eq:c1c2},
$$\lim_{t \rightarrow \infty} \frac{1}{\min\{c_1(t) ,c_2(t)\} \sqrt{t}}\frac{\partial K_B(\by(t))}{\partial y_i}\ge 0. $$
\end{proof}

\begin{lemma}\label{lem:delta2}If $y_i\geq 1-
2\varepsilon_2$ and $\y\in D^*_B$,
$\frac{\partial K_B(\by)}{\partial y_i}< 0$. 
\end{lemma}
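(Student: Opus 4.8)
The plan is to start from the explicit gradient \eqref{eq:bethegradient}, substitute the $\alpha$-fair derivative $U_i'(y_i)=y_i^{-\alpha}$, and show the whole expression is strictly negative once $y_i$ is pushed close to $1$. The first move is to tame the potentially large positive term $-(d(i)-1)\log(1-y_i)$, which blows up as $y_i\to 1$. Since $\by\in D^*_B$ forces $y_j\ge 0$, I have $1-y_i-y_j\le 1-y_i$ for each neighbor $j$, hence $\log(1-y_i-y_j)\le\log(1-y_i)$ and $\sum_{j\in\n(i)}\log(1-y_i-y_j)\le d(i)\log(1-y_i)$. Adding this to the $-(d(i)-1)\log(1-y_i)$ term collapses the degree dependence, leaving the clean bound $\frac{\partial K_B}{\partial y_i}\le \beta y_i^{-\alpha}-\log y_i+\log(1-y_i)$, in which the troublesome $d(i)$ no longer appears.

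Next I would exploit the definition $\varepsilon_2=\frac{1}{2(\exp(\beta 2^{\alpha})+1)}$. Because $\exp(\beta 2^{\alpha})>1$, we have $2\varepsilon_2<\tfrac12$, so the hypothesis $y_i\ge 1-2\varepsilon_2$ already gives $y_i>\tfrac12$ strictly. It then suffices to verify $\beta 2^{\alpha}-\log y_i+\log(1-y_i)\le 0$, equivalently $\frac{y_i}{1-y_i}\ge\exp(\beta 2^{\alpha})$. Since $t\mapsto \frac{t}{1-t}$ is increasing and $1-2\varepsilon_2=\frac{\exp(\beta 2^{\alpha})}{\exp(\beta 2^{\alpha})+1}$ while $2\varepsilon_2=\frac{1}{\exp(\beta 2^{\alpha})+1}$, the hypothesis yields $\frac{y_i}{1-y_i}\ge\frac{1-2\varepsilon_2}{2\varepsilon_2}=\exp(\beta 2^{\alpha})$ exactly, which closes the estimate.

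The one place to be careful is the bookkeeping of strict versus non-strict inequalities. The logarithmic bound $\frac{y_i}{1-y_i}\ge\exp(\beta 2^{\alpha})$ is an equality at the boundary $y_i=1-2\varepsilon_2$, so strictness of the conclusion cannot be extracted there; instead I would obtain strictness from $y_i>\tfrac12$, which gives the strict bound $\beta y_i^{-\alpha}<\beta 2^{\alpha}$ and hence $\frac{\partial K_B}{\partial y_i}<0$ rather than merely $\le 0$. I also use $\by\in D^*_B$ not only for $y_j\ge 0$ but to keep every logarithm well defined, i.e.\ $1-y_i>0$ and $1-y_i-y_j>0$, which hold because $D^*_B$ enforces $y_i\le 1-\delta_2$ and $y_i+y_j\le 1-\delta_3$. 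I do not expect a genuine obstacle here: the argument is an elementary chain of inequalities whose only subtlety is routing the strictness through the $y_i^{-\alpha}$ term.
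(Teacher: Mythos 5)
Your proposal is correct and follows essentially the same route as the paper: both reduce the gradient to the bound $\beta y_i^{-\alpha}-\log\frac{y_i}{1-y_i}$ by discarding the neighbor terms $\log\frac{1-y_i-y_j}{1-y_i}\le 0$, and both close the argument via $\frac{1-2\varepsilon_2}{2\varepsilon_2}=\exp(\beta 2^{\alpha})$ from the definition of $\varepsilon_2$. The only (cosmetic) difference is where strictness comes from: the paper takes it from dropping the strictly negative neighbor sum, while you take it from $y_i>\tfrac12$ giving $\beta y_i^{-\alpha}<\beta 2^{\alpha}$ --- which is, if anything, slightly more robust since it does not require $d(i)\ge 1$.
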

\begin{proof}
\begin{align*}
\frac{\partial K_B(\by)}{\partial y_i}&
=\beta  y_i^{-\alpha} -(d(i)-1)\log(1-y_i)-\log y_i +\sum_{j\in \n(i)}\log(1-y_i-y_j)\cr
&=\beta  y_i^{-\alpha}-\log \frac{y_i}{1-y_i}+\sum_{j\in \n(i)}\log\frac{1-y_i-y_j}{1-y_i}\cr
&<\beta  y_i^{-\alpha}-\log \frac{y_i}{1-y_i}\cr
 &\leq \beta  (\frac{1}{2})^{-\alpha}-\log
\frac{1-2\varepsilon_2}{2\varepsilon_2}\leq 0,
\end{align*}
where the last inequality is from our choice of
$$\varepsilon_2=\frac{1}{2 (\exp(\beta  2^{\alpha})+1)}.$$
\end{proof}

\begin{lemma}\label{lem:delta3}If $y_i\leq
2\varepsilon_1$ and $\y\in D^*_B$,
$\frac{\partial K_B(\by)}{\partial y_i}> 0$. 
\end{lemma}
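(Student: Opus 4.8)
The plan is to work directly from the gradient formula \eqref{eq:bethegradient}, specialized to the $\alpha$-fair case $U_i'(y_i)=y_i^{-\alpha}$, and to recognize that the entire non-utility part collapses into a single logarithm. Concretely, I would rewrite
$$\frac{\partial K_B(\by)}{\partial y_i}=\beta\,y_i^{-\alpha}-\log R_i,\qquad R_i:=\frac{y_i\,(1-y_i)^{d(i)-1}}{\prod_{j\in\n(i)}(1-y_i-y_j)}>0,$$
so that the claim $\partial K_B/\partial y_i>0$ is equivalent to $\beta y_i^{-\alpha}>\log R_i$. The key simplification is that, since $\log x<x$ for every $x>0$, it suffices to prove the purely algebraic inequality $R_i\le\beta y_i^{-\alpha}$; this discards all transcendental terms and is what makes matching the explicit threshold $\varepsilon_1$ feasible. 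This mirrors the proof of Lemma~\ref{lem:delta2}, except that there the neighbor sum $\sum_{j}\log\frac{1-y_i-y_j}{1-y_i}$ was simply dropped as a nonpositive quantity, whereas here that sum (hidden in the denominator of $R_i$) is precisely the obstruction and must be bounded from above.

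Next I would exploit the membership $\by\in D_B^*$ together with the hypothesis $y_i\le 2\varepsilon_1$. First I would record the elementary fact $2\varepsilon_1\le\delta_2/(2d)\le\delta_2/2<1/2$, which follows at once from the definition of $\varepsilon_1$ upon bounding $1+\beta2^{\alpha}d\,\delta_2^{d-1}\ge\beta2^{\alpha}d\,\delta_2^{d-1}$; this guarantees $y_i<\delta_2$ (so every denominator stays positive) and $y_i^{\alpha}<2^{-\alpha}$. Using $y_j\le 1-\delta_2$ for each $j\in\n(i)$, which is a defining constraint of $D_B^*$, every factor obeys $1-y_i-y_j\ge\delta_2-y_i>0$, hence $\prod_{j\in\n(i)}(1-y_i-y_j)\ge(\delta_2-y_i)^{d(i)}\ge(\delta_2-y_i)^{d}$. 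Bounding also $(1-y_i)^{d(i)-1}\le 1$ and $y_i^{1+\alpha}\le 2^{-\alpha}y_i$, the target $R_i\le\beta y_i^{-\alpha}$ reduces to the clean estimate $2^{-\alpha}y_i\le\beta(\delta_2-y_i)^{d}$.

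Finally I would linearize the right-hand side by Bernoulli's inequality: since $0<y_i/\delta_2<1$,
$$(\delta_2-y_i)^{d}=\delta_2^{d}\Big(1-\tfrac{y_i}{\delta_2}\Big)^{d}\ge\delta_2^{d}-d\,y_i\,\delta_2^{d-1}.$$
Substituting and rearranging turns the goal into $y_i\big(2^{-\alpha}+\beta d\,\delta_2^{d-1}\big)\le\beta\delta_2^{d}$, i.e. $y_i\le \beta2^{\alpha}\delta_2^{d}\big/\big(1+\beta2^{\alpha}d\,\delta_2^{d-1}\big)=4\varepsilon_1$, which holds because $y_i\le 2\varepsilon_1$. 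I expect the main work to be bookkeeping rather than conceptual: checking that this chain lands exactly on the stated form of $\varepsilon_1$ (the factor $4$ in its denominator is precisely the slack absorbing the two reductions $y_i^{1+\alpha}\le 2^{-\alpha}y_i$ and the Bernoulli step, together with the preliminary check $y_i<\delta_2$), and confirming that the replacement of $d(i)$ by $d$ is in the safe direction (base $\delta_2-y_i<1$, so a larger exponent only shrinks the product). Notably, the argument is a self-contained one-variable estimate at the low end of the domain, dual to Lemma~\ref{lem:delta2}, and uses neither the concavity of $K_B$ nor the constants $\delta_1,\delta_3$.
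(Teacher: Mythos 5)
Your proposal is correct and follows essentially the same route as the paper's proof: both drop the $(1-y_i)^{d(i)-1}$ factor, bound $1-y_i-y_j\geq \delta_2-y_i$ via the $D^*_B$ constraint $y_j\leq 1-\delta_2$, replace $d(i)$ by $d$, remove the transcendental term by the same inequality (your $\log x< x$ is the paper's $\exp(\beta y_i^{-\alpha})\geq \beta y_i^{-\alpha}$), apply Bernoulli's inequality and $y_i^{-\alpha}\geq 2^{\alpha}$, and land on the identical threshold $y_i\leq 4\varepsilon_1$. If anything, your write-up is slightly more careful than the paper's, since you explicitly verify the preliminary facts $y_i<\delta_2$ and $y_i<1/2$ (via $2\varepsilon_1\leq \delta_2/(2d)$) that the paper's chain of inequalities uses implicitly.
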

\begin{proof}
\begin{align*}
\frac{\partial K_B(\by)}{\partial y_i}&
=\beta  y_i^{-\alpha}-(d(i)-1)\log(1-y_i)-\log y_i +\sum_{j\in \n(i)}\log(1-y_i-y_j)\cr
&> \beta  y_i^{-\alpha}-\log y_i +d\,\log\left(\delta_2-y_i\right)\cr
&=\log \frac{\exp(\beta  y_i^{-\alpha}) \left(\delta_2-y_i\right)^d}{y_i}\cr
&=\log \frac{\exp(\beta  y_i^{-\alpha}) \delta_2^d\left(1-\frac{y_i}{\delta_2}\right)^d}{y_i}\geq 0,
\end{align*}
where the last inequality stems from the fact that $y_i\leq 2\varepsilon_1$ with
our choice of $\varepsilon_1 =\frac{\beta2^{\alpha}\delta_2^d}{4(1+\beta2^{\alpha}d\delta_2^{d-1})}$
as
\begin{align*}
\frac{\beta y_i^{-\alpha}
\delta_2^d\left(1-\frac{y_i}{\delta_2}\right)^d}{y_i} \geq
\frac{\beta y_i^{-\alpha}
\delta_2^d\left(1-d\frac{y_i}{\delta_2}\right)}{y_i}
\geq
\frac{\beta 2^{\alpha}
\delta_2^d\left(1-d\frac{y_i}{\delta_2}\right)}{y_i}\geq 1
\end{align*}
and $\exp(\beta  y_i^{-\alpha}) \ge \beta y_i^{-\alpha}$ since $\beta
y_i^{-\alpha} \ge 1$.
\end{proof}

\begin{lemma}\label{lem:b3}If $y_i+y_j\geq
1-2\varepsilon_3$ and $\y\in D^*_B$,
$\frac{\partial K_B(\by)}{\partial y_i}< 0$. 
\end{lemma}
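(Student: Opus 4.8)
The plan is to follow exactly the template of Lemmas~\ref{lem:delta2} and~\ref{lem:delta3}: produce an explicit upper bound on $\frac{\partial K_B(\by)}{\partial y_i}$ and verify that the defining value of $\varepsilon_3$ forces this bound to be nonpositive. Since the hypothesis $y_i+y_j\ge 1-2\varepsilon_3$ singles out one neighbor $j\in\n(i)$ whose edge constraint is nearly tight, the first step is to split the sum $\sum_{j'\in\n(i)}\log(1-y_i-y_{j'})$ appearing in the gradient \eqref{eq:bethegradient} into the term for $j$ and the remaining $d(i)-1$ terms.

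For every other neighbor $j'\ne j$ I would use $y_{j'}\ge 0$ (indeed $y_{j'}\ge\delta_1$) to write $\log(1-y_i-y_{j'})\le\log(1-y_i)$; these $d(i)-1$ inequalities cancel the $-(d(i)-1)\log(1-y_i)$ term in the gradient and leave
\begin{align*}
\frac{\partial K_B(\by)}{\partial y_i}\ \le\ \beta y_i^{-\alpha}-\log y_i+\log(1-y_i-y_j).
\end{align*}
Because $y\mapsto\beta y^{-\alpha}-\log y$ is strictly decreasing on $(0,1)$, the constraint $y_i\ge\delta_1$ coming from $\by\in D^*_B$ bounds the first two terms by $\beta\delta_1^{-\alpha}-\log\delta_1$, while the hypothesis $1-y_i-y_j\le 2\varepsilon_3$ bounds the last term by $\log(2\varepsilon_3)$. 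Substituting $\varepsilon_3=\frac{\delta_1}{2\exp(\beta\delta_1^{-\alpha})}$, so that $\log(2\varepsilon_3)=\log\delta_1-\beta\delta_1^{-\alpha}$, the three contributions telescope to $0$.

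The only delicate point --- and the ``main obstacle'' in the sense of where care is needed --- is the strictness and the exactness of this cancellation: $\varepsilon_3$ is reverse-engineered precisely so that the worst-case bound equals $0$ rather than something strictly negative, so I would recover the strict inequality $\frac{\partial K_B(\by)}{\partial y_i}<0$ by observing that $\by\in D^*_B$ forces $y_{j'}\ge\delta_1>0$, making each discarded inequality $\log(1-y_i-y_{j'})<\log(1-y_i)$ strict when $d(i)\ge 2$, and that the monotone bound on $\beta y_i^{-\alpha}-\log y_i$ is attained only at the single boundary point $y_i=\delta_1$. Apart from this bookkeeping I expect no real difficulty, since the lemma is the direct analogue of Lemma~\ref{lem:delta2} for the edge (pairwise) constraint $y_i+y_j\le 1$ in place of the single-coordinate constraint $y_i\le 1-\delta_2$, and it feeds into Lemma~\ref{lem:betheupdate} together with Lemma~\ref{LEM:TSTAR} to keep $\by(t)$ off the boundary of $D^*_B$.
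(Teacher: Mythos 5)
Your proposal is correct and follows essentially the same route as the paper's proof: isolate the nearly-tight neighbor $j$, absorb the remaining $d(i)-1$ terms $\log(1-y_i-y_{j'})$ against $-(d(i)-1)\log(1-y_i)$, then bound the surviving expression $\beta y_i^{-\alpha}-\log y_i+\log(1-y_i-y_j)$ using $y_i\ge\delta_1$, the hypothesis $1-y_i-y_j\le 2\varepsilon_3$, and the defining choice $\varepsilon_3=\frac{\delta_1}{2\exp(\beta\delta_1^{-\alpha})}$. Your attention to where strictness comes from is in fact slightly more careful than the paper, which simply asserts a strict inequality when discarding the neighbor terms (an assertion that is vacuous when $d(i)=1$).
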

\begin{proof}
\begin{align*}
\frac{\partial K_B(\by)}{\partial y_i}&=
\beta  y_i^{-\alpha}- (d(i)-1)\log(1-y_i)-\log y_i +\sum_{k\in \n(i)}\log(1-y_k-y_i)\cr
&=\beta  y_i^{-\alpha}-\log {y_i}+\log(1-y_i-y_j)+\sum_{k\in \n(i)\setminus j}\log\frac{1-y_k-y_i}{1-y_i}\cr
&<\beta  y_i^{-\alpha}-\log {y_i}+\log(1-y_i-y_j)\cr
&\leq \beta  y_i^{-\alpha}-\log \delta_1+\log 2\varepsilon_3 \leq 0,
\end{align*} where the last inequality is from our choice of
$\varepsilon_3=\frac{\delta_1}{2\exp(\beta \delta_1^{-\alpha})}.$
\end{proof}
\smallskip
\noindent{\fbs Completing the proof of Lemma~\ref{lem:betheupdate}.} 
For proving $\by (t) \in D^*_B$, we need the following three inequalities: 
\begin{eqnarray}
y_i (t)& <& 1-\delta_2\label{eq1}\\ 
y_i (t) &>&\delta_1\label{eq2}\\
y_i (t) + y_j (t)& <&1-\delta_3.\label{eq3}
\end{eqnarray}

\vspace{0.2cm}

\noindent{\bf \em Proof of \eqref{eq1}.} 
Let $t_2 := c_2^{-1}(\delta_2).$ Then, for time $t<t_2,$ $y_i(t)
< 1-\delta_2$ from the dynamic bound. For time $t\ge t_2,$ $y_i(t) <
1-\delta_2,$ since $\frac{1}{\sqrt{t}}\left|\frac{\partial
    K_B(\by)}{\partial y_i}\right| < \frac{c_2(t)}{2}\le \frac{\delta_2}{2}$ from Lemma \ref{LEM:TSTAR} and
$\frac{\partial K_B(\by)}{\partial y_i}< 0$ if $y_i > 1-2\delta_2,$ from
Lemma \ref{lem:delta2}.

\vspace{0.2cm}

\noindent{\bf \em Proof of \eqref{eq2}.} 
Similarly, let $t_1 := c_1^{-1}(\delta_1).$ Then, for time $t<t_1,$ $y_i(t)
> \delta_1$ from the dynamic bound. For time $t\ge t_1,$ $y_i(t) >
\delta_1,$ since $\frac{1}{\sqrt{t}}\left|\frac{\partial
    K_B(\by)}{\partial y_i}\right| < \frac{c_1(t)}{2} < \frac{\delta_1}{2}  $ from Lemma \ref{LEM:TSTAR} and
$\frac{\partial K_B(\by)}{\partial y_i} > 0$ if $y_i < 2\delta_1,$ from
Lemma \ref{lem:delta3}.

\vspace{0.2cm}

\noindent{\bf \em Proof of \eqref{eq3}.} 
Let $t_3 := c_2^{-1}(\delta_3).$ Then, for time $t<t_3,$ $y_i(t)
+y_j(t)< 1-\delta_3$ from the dynamic bound. For time $t\ge t_3,$
$y_i(t) +y_j (t)<
1-\delta_3,$ since $\frac{1}{\sqrt{t}}\left(\left|\frac{\partial
    K_B(\by)}{\partial y_i}\right| + \left|\frac{\partial
    K_B(\by)}{\partial y_j}\right| \right)<c_2(t) \le \delta_3  $ from Lemma \ref{LEM:TSTAR} and
$\max\{\frac{\partial K_B(\by)}{\partial y_i},\frac{\partial K_B(\by)}{\partial
  y_i}\} < 0$ if $y_i + y_j > 1-2\delta_3,$ from
Lemma \ref{lem:b3}.

\vspace{0.2cm}

By combining \eqref{eq1}, \eqref{eq2} and \eqref{eq3}, it follows that
$\by (t) \in D^*_B$ for all $t.$

\end{document}